\numberwithin{equation}{section}
\def \vec#1{{\bf{#1}}}
\def\pd#1#2{\frac{\partial #1}{\partial #2}}
\newcommand{\bi}{\begin{itemize}}
\newcommand{\ei}{\end{itemize}}
\newcommand{\diverg}{\vec{\nabla}\cdot}
\newcommand{\director}{\vec{n}}
\newcommand{\curl}{\vec{\nabla}\times}
\newcommand{\Ltwoinner}[3]{\langle #1,#2 \rangle_0}
\newcommand{\Ltwonorm}[2]{\Vert #1 \Vert_0}
\newcommand{\Ltwonormndim}[3]{\Vert #1 \Vert_0}
\newcommand{\Ltwoinnerndim}[4]{\langle #1,#2 \rangle_0}
\newcommand{\diff}[1]{\, d#1}
\newcommand{\ltwonorm}[1]{\vert #1 \vert}
\newcommand{\ltwoinner}[2]{( #1, #2 )}
\newcommand{\kdirector}{\vec{n}_k}
\newcommand{\ddirector}{\delta \director}
\newcommand{\dlambda}{\delta \lambda}
\newcommand{\klambda}{\lambda_k}
\DeclareMathOperator*{\argmin}{argmin}
\newcommand{\kphi}{\phi_k}
\newcommand{\dphi}{\delta \phi}
\newcommand{\lagdivn}{\mathcal{L}_{\director}[\vec{v}]}
\newcommand{\lagdivlam}{\mathcal{L}_{\lambda}[\gamma]}
\newcommand{\lagdivphi}{\mathcal{L}_{\phi}[\psi]}
\newcommand{\Honenorm}[2]{\Vert #1 \Vert_1}
\newcommand{\Hdc}{\mathcal{H}^{DC}{(\Omega)}}
\newcommand{\Hdcnot}{\mathcal{H}^{DC}_0{(\Omega)}}
\newcommand{\Hone}[1]{H^1(#1)}
\newcommand{\Honenot}[1]{H^1_0({#1})}
\newcommand{\Honebnot}[1]{H^{1,0}({#1})}
\newcommand{\Honeb}[1]{H^{1,g}(#1)}
\newcommand{\Ltwo}[1]{L^2(#1)}
\newcommand{\Lp}[1]{L^p (\Omega)}
\newcommand{\Hcurl}[1]{H(\text{curl},#1)}
\newcommand{\Hdiv}[1]{H(\text{div},#1)}
\newcommand{\Linfinity}[1]{L^{\infty}(\Omega)}
\newcommand{\Hdcnorm}[2]{\Vert #1 \Vert_{DC}}
\newcommand{\triangulation}{\mathcal{T}_h}
\newcommand{\Pflex}{\vec{P}_{\text{flexo}}}
\newcounter{casenum}
\newenvironment{caseof}{\setcounter{casenum}{1}}{}
\newcommand{\case}[2]{\vskip.5\baselineskip\par\noindent {\bfseries Case \arabic{casenum}.} #1 \\ #2\addtocounter{casenum}{1}}
\title{Energy Minimization for Liquid Crystal Equilibrium with Electric and Flexoelectric Effects\thanks{Submitted June 28, 2014}}
\author{J. H. Adler, T. J. Atherton, T. R. Benson, D. B. Emerson, S. P. MacLachlan}
\begin{document}

\maketitle

\begin{abstract}
This paper outlines an energy-minimization finite-element approach to the modeling of equilibrium configurations for nematic liquid crystals in the presence of internal and external electric fields. The method targets minimization of system free energy based on the electrically and flexoelectrically augmented Frank-Oseen free energy models. The Hessian, resulting from the linearization of the first-order optimality conditions, is shown to be invertible for both models when discretized by a mixed finite-element method under certain assumptions. This implies that the intermediate discrete linearizations are well-posed. A coupled multigrid solver with Vanka-type relaxation is proposed and numerically vetted for approximation of the solution to the linear systems arising in the linearizations. Two electric model numerical experiments are performed with the proposed iterative solver. The first compares the algorithm's solution of a classical Freedericksz transition problem to the known analytical solution and demonstrates the convergence of the algorithm to the true solution. The second experiment targets a problem with more complicated boundary conditions, simulating a nano-patterned surface. In addition, numerical simulations incorporating these nano-patterned boundaries for a flexoelectric model are run with the iterative solver. These simulations verify expected physical behavior predicted by a perturbation model. The algorithm accurately handles heterogeneous coefficients and efficiently resolves configurations resulting from classical and complicated boundary conditions relevant in ongoing research.
 \end{abstract}
 
 \begin{keywords}
nematic liquid crystals,  mixed finite elements, saddle-point problem, Newton linearization, energy optimization, coupled multigrid, Vanka relaxation.
\end{keywords}

\begin{AMS}
76A15, 65N30, 49M15, 65N22, 65N55, 65F10
\end{AMS}
 
 \pagestyle{myheadings}
\thispagestyle{plain}
\markboth{\sc Adler, Atherton, Benson, Emerson, Maclachlan}{\sc Nematic Liquid Crystal Energy Minimization}

\section{Introduction}

Liquid crystals, whose discovery is attributed to Reinitzer in 1888 \cite{Reinitzer1}, are substances that possess mesophases with properties intermediate between liquids and crystals, existing at different temperatures or solvent concentrations. The focus of this paper is on nematic liquid crystal phases, which are formed by rod-like molecules that self-assemble into an ordered structure, such that the molecules tend to align along a preferred orientation. The preferred average direction at any point in a domain $\Omega$ is known as the director, denoted $\director(x,y,z) = (n_1, n_2, n_3)^T$. The director is taken to be of unit length at every point and headless, that is $\director$ and $-\director$ are indistinguishable, reflecting the observed symmetry of the phase.

In addition to their self-structuring properties, nematic liquid crystals are dielectrically active. Thus, their configurations are affected by electric fields. Moreover, since these materials are birefringent, with refractive indices that depend on the polarization of light, they can be used to control the propagation of light through a nematic structure. These traits have led, and continue to lead, to important discoveries in display technologies and beyond \cite{Lagerwall1}. Modern applications include  nanoparticle organization, liquid crystal-functionalized polymer fibers \cite{Lagerwall1}, and liquid crystal elastomers designed to produce effective actuator devices such as light driven motors \cite{Yamada1} and artificial muscles \cite{Thomsen1}. Thorough overviews of liquid crystal physics are found in \cite{Stewart1, deGennes1, Chandrasekhar1}. 

Many mathematical and computational models of liquid crystal continuum theory lead to complicated systems involving unit length constrained vector fields. Currently, the complexity of such systems has restricted the existence of known analytical solutions to simplified geometries in one (1-D) or two dimensions (2-D), often under strong simplifying assumptions. When coupled with electric fields and other effects, far fewer analytical solutions exist, even in 1-D \cite{Stewart1}. In addition, associated systems of partial differential equations, such as the equilibrium equations \cite{Stewart1, Ericksen4}, suffer from non-unique solutions, which must be distinguished via energy arguments. Due to such difficulties, efficient, theoretically supported, numerical approaches to the modeling of nematic liquid crystals under free elastic and augmented electric effects are of great importance. A number of computational techniques for liquid crystal equilibrium and dynamics problems exist \cite{Liu1, Liu2, Liu3, Stewart1}, including least-squares finite-element methods \cite{Atherton1} and discrete Lagrange multiplier approaches with simplifying assumptions \cite{Ramage1, Ramage2}. In addition, numerical experiments involving finite-element methods with Lagrange multipliers, applied to the equilibrium equations, have been successful in capturing certain liquid crystal characteristics \cite{Pandolfi1}. 

In this paper, we propose a method that directly targets energy minimization in the continuum, via Lagrange multiplier theory on Banach spaces, to resolve liquid crystal equilibrium configurations in the presence of applied electric fields and internally induced electric fields due to flexoelectric effects. The approach is derived absent the often used one-constant approximation \cite{Ramage1, Liu1, Stewart1, Cohen1}; that is, the method described here, and the accompanying theory, are applicable for a wide range of physical parameters. This allows for significantly improved modeling of physical phenomena not captured in many models. Furthermore, most models and analytical approaches rely on assumptions to reduce the dimensionality of the problem. Here, the method and theory are suitable for use on 2-D and three dimensional (3-D) domains.

After defining the energy functional to be minimized, first-order optimality conditions are computed. These first-order conditions contain highly nonlinear terms and are, therefore, linearized with a generalized Newton's method. The resulting Newton iteration inherently contains a complicated saddle-point structure \cite{Benzi2, Ramage1}. The discrete Hessians associated with finite-element discretization of the Newton linearizations are shown to be invertible, for both the electric and flexoelectric models, when employing certain finite-element spaces. 

In addition, we discuss a coupled multigrid solver with Vanka-type relaxation for accurate and efficient resolution of solutions to the saddle-point systems encountered in the discretization of the linearization systems for both the electric and flexoelectric models. A full, mesh-cell oriented Vanka-type relaxation technique is elaborated and implemented. The performance of the multigrid solver is compared to that of a direct LU decomposition approach. Furthermore, it is applied to a collection of numerical examples, demonstrating its accuracy and efficiency.

This paper is organized as follows. We first introduce the electric field model under consideration in Section \ref{energymodels}. The method framework is derived and Dirichlet boundary condition simplifications are discussed in Section \ref{freeenergymin}. In Section \ref{wellposedhessian}, the invertibility of the discretized Hessian for the intermediate Newton linearizations is established. An extension of the method and associated theory for the flexoelectric model is given in Section \ref{flexoaugmentation}. The numerical methodology, iterative solver, and numerical experiments are detailed in Sections \ref{nummethodology} and \ref{numresults}. Finally, Section \ref{conclusion} gives some concluding remarks, and future work is discussed.

\section{Energy Model} \label{energymodels}

To begin defining the full energy model under consideration, we first discuss the free elastic energy model. At equilibrium, absent any external forces, fields, or boundary conditions, the free elastic energy present in a liquid crystal sample is given by an integral functional that depends on the state variables of the system. A liquid crystal sample tends to the state of lowest free energy. While a number of free energy models exist \cite{Davis1}, this paper considers the Frank-Oseen free elastic model \cite{Stewart1, Virga1}. This model represents the free elastic energy density, $w_F$, in a sample as
\begin{align} \label{FrankOseenFree}
w_F &= \frac{1}{2}K_1(\diverg \director)^2+ \frac{1}{2}K_2(\director \cdot \curl \director)^2+ \frac{1}{2}K_3\vert \director \times \curl \director \vert^2 \nonumber \\
& \qquad+ \frac{1}{2}(K_2+K_4)\diverg[(\director \cdot \vec{\nabla}) \director - (\diverg \director) \director].
\end{align}
Throughout this paper, the standard Euclidean inner product and norm are denoted $(\cdot, \cdot)$ and $\vert \cdot \vert$, respectively. The $K_i$, $i=1,2,3,4$, are known as the Frank elastic constants \cite{Frank1}, which vary depending on temperature and liquid crystal type.  As in \cite{Emerson1}, let
\begin{equation*} \label{matrixD}
\vec{Z} = \kappa \director \otimes \director + (\vec{I} - \director \otimes \director) = \vec{I} - (1-\kappa) \director \otimes \director,
\end{equation*}
where $\kappa = K_2/K_3$ with $K_2, K_3 \geq 0$ by Ericksen's inequalities \cite{Ericksen2}. In general, we consider the case that  $K_2, K_3 \neq 0$. Denote the classical $\Ltwo{\Omega}$ inner product and norm as $\Ltwoinner{\cdot}{\cdot}{\Omega}$ and $\Ltwonorm{\cdot}{\Omega}$, respectively. Using algebraic identities, the fact that $\director$ is of unit length, and integrating the above density function, the total free elastic energy for a domain $\Omega$ is
\begin{align*}
\int_{\Omega} w_F \diff{V} =& \frac{1}{2}(K_1-K_2-K_4) \Ltwonorm{\diverg \director}{\Omega}^2 +\frac{1}{2}K_3\Ltwoinnerndim{\vec{Z} \curl \director}{\curl \director}{\Omega}{3} \nonumber \\
& +\frac{1}{2}(K_2+K_4) \big(\Ltwoinnerndim{\nabla n_1}{\frac{\partial \director}{\partial x}}{\Omega}{3} +\Ltwoinnerndim{\nabla n_2}{\frac{\partial \director}{\partial y}}{\Omega}{3}+\Ltwoinnerndim{\nabla n_3}{\frac{\partial \director}{\partial z}}{\Omega}{3} \big).
\end{align*} 

For the special case of full Dirichlet boundary conditions, we consider a fixed director $\director$ at each point on the boundary of $\Omega$. Considering the integration carried out on the terms in \eqref{FrankOseenFree},
\begin{align} \label{stronganchoringdivthm}
&\frac{1}{2}(K_2+K_4) \int_{\Omega} \diverg[(\director \cdot \vec{\nabla}) \director - (\diverg \director) \director] \diff{V} \nonumber \\
&\qquad \qquad \qquad = \frac{1}{2}(K_2+K_4)\int_{\partial \Omega} [(\director \cdot \vec{\nabla}) \director - (\diverg \director) \director] \cdot \mathbf{\nu} \diff{S},
\end{align}
by the divergence theorem. Further, since $\director$ is fixed along $\partial \Omega$, the energy contributed by $\director$ on the boundary is constant regardless of the configuration of $\director$ on the interior of $\Omega$. Thus, in the minimization to follow, the energy contribution from this term is ignored. For this reason, \eqref{stronganchoringdivthm} is often referred to as a null Lagrangian \cite{Virga1}. Note that the above identity is also applicable to a rectangular domain with mixed Dirichlet and periodic boundary conditions. Such a domain will be considered in numerical experiments below.

A number of methods involving computation of liquid crystal equilibria or dynamics utilize the so called one-constant approximation that $K_1=K_2=K_3$ and $K_4 = 0$ \cite{Ramage1, Liu1, Stewart1, Cohen1}, in order to significantly simplify the free elastic energy density to
\begin{equation*}
\hat{w}_F = \frac{1}{2}K_1 \vert \nabla \director \vert^2, \text{ where } \vert \nabla \director \vert^2 = \sum_{i,j =1}^3 \left ( \pd{n_i}{x_j} \right)^2.
\end{equation*}
This expression for the free elastic energy density is more amenable to theoretical development but ignores significant physical characteristics \cite{Lee1, Atherton2}. The following method is derived without such an assumption.

This paper extends the approach of \cite{Emerson1} to consider electric fields. In the presence of an electric field, the free energy in a liquid crystal sample is directly affected. This interaction is strongly coupled as nematic polarization and electric displacement, in turn, affect the original electric field. The coupling is captured by an auxiliary term added to the Frank-Oseen equations such that the total system free energy has the form
\begin{equation}
\int_{\Omega} \big ( w_F - \frac{1}{2} \vec{D} \cdot \vec{E} \big ) \diff{V} \label{generalelectricfunctional},
\end{equation}
where $\vec{D}$ is the electric displacement vector induced by polarization and $\vec{E}$ is the local electric field \cite{deGennes1}. The electric displacement vector is written $\vec{D} = \epsilon_0 \epsilon_{\perp} \vec{E} + \epsilon_0 \epsilon_a(\director \cdot \vec{E})\director$. 
Here, $\epsilon_0>0$ is the permittivity of free space constant. The dielectric anisotropy constant is $\epsilon_a = \epsilon_{\parallel} - \epsilon_{\perp}$, where the constant variables $\epsilon_{\parallel} > 0$ and $\epsilon_{\perp} > 0$ represent the parallel and perpendicular dielectric permittivity, respectively, specific to the liquid crystal. If $\epsilon_a > 0$, the director is attracted to parallel alignment with the electric field, and if $\epsilon_a<0$, the director tends to align perpendicular to $\vec{E}$. Thus, 
\begin{equation*}
\vec{D} \cdot \vec{E}= \epsilon_0\epsilon_{\perp} \vec{E} \cdot \vec{E} + \epsilon_0\epsilon_a (\director \cdot \vec{E})^2.
\end{equation*}
Therefore, Equation \eqref{generalelectricfunctional} is expanded as
\begin{align}
\int_{\Omega} \big ( w_F - \frac{1}{2} \vec{D} \cdot \vec{E} \big ) \diff{V} =& \int_{\Omega} w_F \diff{V} - \frac{1}{2} \epsilon_0\epsilon_{\perp}\Ltwoinnerndim{\vec{E}}{\vec{E}}{\Omega}{3}- \frac{1}{2} \epsilon_0 \epsilon_a \Ltwoinner{\director \cdot \vec{E}}{\director \cdot \vec{E}}{\Omega} \label{ElectricSystemEnergy}.
\end{align}
The addition of the electric field not only increases the complexity of the functional, it introduces an inherent saddle-point structure into the equilibria for the liquid crystal samples. Energy minima are those that minimize the contribution of the free elastic energy, while maximizing the negative contribution of the electric field terms. Moreover, the relevant Maxwell's equations for a static electric field, $\diverg \vec{D} = 0$ and $\curl \vec{E} = \vec{0}$,
known as Gauss' and Faraday's laws, respectively, must be satisfied.

\section{Free Energy Minimization} \label{freeenergymin}

In \cite{Emerson1}, a general approach for computing the equilibrium state for $\director$ is derived. We apply this methodology to the augmented elastic-electric free energy. The equilibrium state corresponds to the configuration which minimizes the system free energy subject to the local constraint that $\director$ is of unit length throughout the sample volume, $\Omega$. That is, the minimizer must satisfy $\director \cdot \director = 1$ pointwise throughout the volume. In light of the necessary Maxwell equations and the fact that we are considering static fields, we reformulate the system energy in \eqref{ElectricSystemEnergy} using an electric potential function, $\phi$, such that $\vec{E} = - \nabla \phi$, and define the functional to be minimized as
\begin{align} \label{functional2}
\mathcal{F}_1(\director, \phi) &= (K_1-K_2-K_4) \Ltwonorm{\diverg \director}{\Omega}^2 + K_3\Ltwoinnerndim{\vec{Z} \curl \director}{\curl \director}{\Omega}{3} \nonumber \\
& \qquad + (K_2+K_4) \big(\Ltwoinnerndim{\nabla n_1}{\frac{\partial \director}{\partial x}}{\Omega}{3} + \Ltwoinnerndim{\nabla n_2}{\frac{\partial \director}{\partial y}}{\Omega}{3}+ \Ltwoinnerndim{\nabla n_3}{\frac{\partial \director}{\partial z}}{\Omega}{3} \big) \nonumber \\
& \qquad - \epsilon_0\epsilon_{\perp}\Ltwoinnerndim{\nabla \phi}{\nabla \phi}{\Omega}{3} - \epsilon_0 \epsilon_a \Ltwoinner{\director \cdot \nabla \phi}{\director \cdot \nabla \phi}{\Omega}.
\end{align}
Using a potential function guarantees that Faraday's law is trivially satisfied. Furthermore, it is not difficult to show that Gauss' law is satisfied at the minimum of the above functional.

In the presence of full Dirichlet boundary conditions or a rectangular domain with mixed Dirichlet and periodic boundary conditions, the functional to be minimized is significantly simplified to
\begin{align}
\mathcal{F}_2(\director, \phi) &= K_1 \Ltwonorm{\diverg \director}{\Omega}^2 + K_3\Ltwoinnerndim{\vec{Z} \curl \director}{\curl \director}{\Omega}{3} \nonumber \\
&\qquad - \epsilon_0\epsilon_{\perp}\Ltwoinnerndim{\nabla \phi}{\nabla \phi}{\Omega}{3}- \epsilon_0 \epsilon_a \Ltwoinner{\director \cdot \nabla \phi}{\director \cdot \nabla \phi}{\Omega}, \label{functional3}
\end{align} 
by the application of \eqref{stronganchoringdivthm}. However, the functional still contains nonlinear terms introduced by, for instance, the presence of $\vec{Z} = \vec{Z}(\director)$. 

We proceed with the functional in \eqref{functional2} in building a framework for minimization under general boundary conditions. However, in the treatment of existence and uniqueness theory, we assume the application of full Dirichlet or mixed Dirichlet and periodic boundary conditions and, therefore, utilize the simplified form in \eqref{functional3}.

As done in \cite{Emerson1}, we consider the spaces
\begin{align*}
\Hdiv{\Omega} &= \{\vec{v} \in L^2(\Omega)^3 : \diverg \vec{v} \in L^2(\Omega) \}, \\
\Hcurl{\Omega} &= \{ \vec{v} \in L^2(\Omega)^3 : \curl \vec{v} \in L^2(\Omega)^3 \}.
\end{align*}
Define 
\begin{equation*}
\Hdc= \{ \vec{v} \in \Hdiv{\Omega} \cap \Hcurl{\Omega} : B(\vec{v}) = \bar{\vec{g}} \},
\end{equation*}
 with norm $\Hdcnorm{\vec{v}}{\Omega}^2 = \Ltwonormndim{\vec{v}}{\Omega}{3}^2 + \Ltwonorm{\diverg \vec{v}}{\Omega}^2 + \Ltwonormndim{\curl \vec{v}}{\Omega}{3}^2$ and appropriate boundary conditions $B(\vec{v})=\bar{\vec{g}}$. Further, let $\Hdcnot = \{ \vec{v} \in \Hdiv{\Omega} \cap \Hcurl{\Omega} : B(\vec{v}) = \vec{0} \}$. Let
\begin{align*}
\Honeb{\Omega} = \{f \in \Hone{\Omega} : B_1(f) = g \},
\end{align*}
where $\Hone{\Omega}$ represents the classical Sobolev space and $B_1(f) = g$ is an appropriate boundary condition expression for $\phi$. Finally,  denote the unit sphere as $\mathcal{S}^2$. Using Functional \eqref{functional2}, the desired minimization becomes
\begin{equation*} \label{minioversphere}
\director_0, \phi_0 = \argmin_{\director, \phi \in \left (\mathcal{S}^2 \cap \Hdc \right) \times \Honeb{\Omega}}  \mathcal{F}_1(\director, \phi).
\end{equation*}

\subsection{First-Order Continuum Optimality Conditions} \label{electricfirstorderconditions}

Since $\director$ must be of unit length, it is natural to employ a Lagrange multiplier approach. This length requirement represents a pointwise equality constraint such that $\ltwoinner{\director}{\director} - 1 = 0$.
Thus, following general constrained optimization theory \cite{Luenberger1}, define the Lagrangian
\begin{align*}
\mathcal{L}(\director, \phi, \lambda) &= \mathcal{F}_1(\director, \phi) + \int_{\Omega} \lambda(\vec{x})(\ltwoinner{\director}{\director}-1) \diff{V},
\end{align*} 
where $\lambda \in \Ltwo{\Omega}$. In order to minimize \eqref{functional2}, we compute the G\^{a}teaux derivatives of $\mathcal{L}$ with respect to $\director$, $\phi$, and $\lambda$ in the directions $\vec{v} \in \Hdcnot$, $\psi \in \Honebnot{\Omega}$, and $\gamma \in L^2(\Omega)$, respectively. Hence, necessary continuum first-order optimality conditions are derived as
\begin{align*}
\lagdivn &= \frac{\partial}{\partial \director} \mathcal{L}(\director, \phi, \lambda) [\vec{v}] =0, & & \forall \vec{v} \in \Hdcnot, \\
\lagdivphi &= \frac{\partial}{\partial \phi} \mathcal{L}(\director, \phi, \lambda) [\psi] =0, & & \forall \psi \in \Honebnot{\Omega}, \\
\lagdivlam &= \frac{\partial}{\partial \lambda} \mathcal{L}(\director, \phi, \lambda) [\gamma] =0,& & \forall \gamma \in L^2(\Omega).
\end{align*}
Computing these derivatives yields the variational system
\begin{align*}
\lagdivn &= 2(K_1-K_2-K_4)\Ltwoinner{\diverg \director}{\diverg \vec{v}}{\Omega} + 2K_3\Ltwoinnerndim{\vec{Z} \curl \director}{\curl \vec{v}}{\Omega}{3} \nonumber  \\
& \qquad+ 2(K_2-K_3)\Ltwoinner{\director \cdot \curl \director}{\vec{v} \cdot \curl \director}{\Omega} + 2(K_2+K_4)\big(\Ltwoinnerndim{\nabla n_1}{\pd{\vec{v}}{x}}{\Omega}{3} \nonumber \\
& \qquad + \Ltwoinnerndim{\nabla n_2}{\pd{\vec{v}}{y}}{\Omega}{3} +\Ltwoinnerndim{\nabla n_3}{\pd{\vec{v}}{z}}{\Omega}{3} \big)-2 \epsilon_0 \epsilon_a \Ltwoinner{\director \cdot \nabla \phi}{\vec{v} \cdot \nabla \phi}{\Omega} \nonumber \\
& \qquad + 2 \int_{\Omega} \lambda \ltwoinner{\director}{\vec{v}} \diff{V} = 0, \qquad \hspace{4.45cm} \forall \vec{v} \in \Hdcnot, \\
\lagdivphi &= -2 \epsilon_0 \epsilon_{\perp}\Ltwoinnerndim{\nabla \phi}{\nabla \psi}{\Omega}{3} - 2 \epsilon_0 \epsilon_a \Ltwoinner{\director \cdot \nabla \phi}{\director \cdot \nabla \psi}{\Omega} = 0, \hspace{1.6cm} \forall \psi \in \Honebnot{\Omega}, \\
\lagdivlam &= \int_{\Omega} \gamma(\ltwoinner{\director}{\director} -1) \diff{V}=0, \hspace{5.2cm} \forall \gamma \in \Ltwo{\Omega}. 
\end{align*}

Note that $\lagdivphi = 0$, in the system above, is, in fact, the weak form of Gauss' law. Therefore, at the functional minimum both Gauss' and Faraday's laws are satisfied.

\subsection{Nonlinearities and Newton Linearization} \label{newtonstepssection}

The system above is nonlinear; therefore, Newton iterations are employed by computing a generalized first-order Taylor series expansion, requiring computation of the Hessian \cite{Benzi1, Nocedal1}. Let $\kdirector$, $\kphi$, and $\lambda_k$ be the current approximations for $\director$, $\phi$, and $\lambda$, respectively. Additionally, let $\ddirector= \director_{k+1} - \kdirector$, $\dphi = \phi_{k+1} - \kphi$, and $\dlambda = \lambda_{k+1}-\klambda$ be updates to the current approximations that we seek to compute. Then, the Newton iterations are denoted
\begin{equation} \label{ElectricPotentialNewtonSystem}
\left [ \begin{array}{c c c}
\mathcal{L}_{\director \director} & \mathcal{L}_{\director \phi} & \mathcal{L}_{\director \lambda} \\
\mathcal{L}_{\phi \director} & \mathcal{L}_{\phi \phi} & \mathcal{L}_{\phi \lambda} \\
\mathcal{L}_{\lambda \director} & \mathcal{L}_{\lambda \phi}& \mathcal{L}_{\lambda \lambda}
\end{array} \right] 
\left [ \begin{array}{c}
\ddirector \\
\dphi \\
\dlambda
\end{array} \right] = -
\left[ \begin{array}{c}
\mathcal{L}_{\director} \\
\mathcal{L}_{\phi} \\
\mathcal{L}_{\lambda} 
\end{array} \right],
\end{equation}
where each of the system components are evaluated at $\kdirector$, $\kphi$, and $\klambda$. The matrix-vector multiplication indicates the direction that the derivatives in the Hessian are taken. For instance, $\mathcal{L}_{\lambda \director}[\gamma] \cdot \ddirector = \pd{ }{\director} \left( \mathcal{L}_{\lambda} (\kdirector, \klambda)[\gamma] \right)[\ddirector]$, where the partials indicate G\^{a}teaux derivatives in the respective variables. Note that $\mathcal{L}_{\lambda \lambda} = \mathcal{L}_{\lambda \phi} =  \mathcal{L}_{\phi \lambda}=0$. Hence, the Hessian in \eqref{ElectricPotentialNewtonSystem} simplifies to a saddle-point matrix, which poses unique difficulties for the efficient computation of the solution to the resulting linear system. Such structures commonly appear in constrained optimization and other settings; for a comprehensive overview of discrete saddle-point problems see \cite{Benzi2}. Here, we focus only on the linearization step rather than the underlying linear solvers. An efficient iterative solver is discussed below. Considering the other six components of the Hessian, the derivatives involving $\lambda$ are
\begin{align*}
\mathcal{L}_{\lambda \director}[\gamma] \cdot \ddirector = 2 \int_{\Omega} \gamma \ltwoinner{\kdirector}{\ddirector} \diff{V}, & & \mathcal{L}_{\director \lambda}[\vec{v}] \cdot \dlambda= 2 \int_{\Omega} \dlambda \ltwoinner{\kdirector}{\vec{v}} \diff{V}.
\end{align*}
The second order terms involving $\phi$ are
\begin{align*}
\mathcal{L}_{\phi \phi}[\psi] \cdot \dphi &= -2 \epsilon_0 \epsilon_{\perp} \Ltwoinnerndim{\nabla \dphi}{\nabla \psi}{\Omega}{3} - 2 \epsilon_0 \epsilon_a \Ltwoinner{\kdirector \cdot \nabla \dphi}{\kdirector \cdot \nabla \psi}{\Omega}, \\
\mathcal{L}_{\phi \director}[\psi] \cdot \ddirector &=-2\epsilon_0 \epsilon_a \Ltwoinner{\kdirector \cdot \nabla \kphi}{\ddirector \cdot \nabla \psi}{\Omega} -2\epsilon_0 \epsilon_a \Ltwoinner{\ddirector \cdot \nabla \phi_k}{\kdirector \cdot \nabla \psi}{\Omega}, \\
\mathcal{L}_{\director \phi}[\vec{v}] \cdot \dphi &= - 2\epsilon_0 \epsilon_a \Ltwoinner{\kdirector \cdot \nabla \kphi}{\vec{v} \cdot \nabla \dphi}{\Omega} - 2\epsilon_0 \epsilon_a\Ltwoinner{\kdirector \cdot \nabla \dphi}{ \vec{v} \cdot \nabla \phi_k}{\Omega}.
\end{align*}
Finally, the second order derivative with respect to $\director$ is 
\begin{align*}
\mathcal{L}_{\director \director}[\vec{v}] \cdot \ddirector &= 2(K_1 - K_2 - K_4)\Ltwoinner{\diverg \ddirector}{\diverg \vec{v}}{\Omega} + 2K_3 \Ltwoinnerndim{\vec{Z}(\kdirector) \curl \ddirector}{\curl \vec{v}}{\Omega}{3} \nonumber \\
& \qquad + 2(K_2-K_3) \Big(\Ltwoinner{\ddirector \cdot \curl \vec{v}}{\kdirector \cdot \curl \kdirector}{\Omega} \nonumber \\
& \qquad +\Ltwoinner{\kdirector \cdot \curl \vec{v}}{\ddirector \cdot \curl \kdirector}{\Omega} + \Ltwoinner{\kdirector \cdot \curl \kdirector}{\vec{v} \cdot \curl \ddirector}{\Omega} \nonumber \\
& \qquad + \Ltwoinner{\kdirector \cdot \curl \ddirector}{\vec{v} \cdot \curl \kdirector}{\Omega} + \Ltwoinner{\ddirector \cdot \curl \kdirector}{\vec{v} \cdot \curl \kdirector}{\Omega}\Big) \nonumber \\
& \qquad + 2(K_2+K_4) \big( \Ltwoinnerndim{\nabla \delta n_1}{\pd{\vec{v}}{x}}{\Omega}{3} +\Ltwoinnerndim{\nabla \delta n_2}{\pd{\vec{v}}{y}}{\Omega}{3} + \Ltwoinnerndim{\nabla \delta n_3}{\pd{\vec{v}}{z}}{\Omega}{3} \big) \nonumber \\
& \qquad - 2 \epsilon_0 \epsilon_a \Ltwoinner{\ddirector \cdot \nabla \kphi}{\vec{v} \cdot \nabla \kphi}{\Omega}+ 2\int_{\Omega} \klambda \ltwoinner{\ddirector}{\vec{v}} \diff{V}.
\end{align*}

Completing \eqref{ElectricPotentialNewtonSystem} with the above Hessian computations yields a linearized variational system. For these iterations, we compute $\ddirector$, $\dphi$, and $\dlambda$ satisfying \eqref{ElectricPotentialNewtonSystem} for all $\vec{v} \in \Hdcnot$, $\psi \in \Honebnot{\Omega}$, and $\gamma \in L^2(\Omega)$ with the current approximations $\kdirector$, $\kphi$, and $\klambda$. While they typically improve robustness and efficiency, we do not consider the use of line searches or trust regions in the work presented here, leaving this for future work. If we are considering a system with Dirichlet boundary conditions, as described above, we eliminate the $(K_2 + K_4)$ terms from \eqref{ElectricPotentialNewtonSystem}. This produces a simplified, but non-trivial, linearization.

\section{Well-Posedness of the Discrete Systems} \label{wellposedhessian}

Performing the outlined Newton iterations necessitates solving the above linearized systems for the update functions $\ddirector$, $\dphi$, and $\dlambda$. Finite elements are used to numerically approximate these updates as $\ddirector_h$, $\dphi_h$, and $\dlambda_h$. Throughout this section, we assume that full Dirichlet boundary conditions are enforced for $\director$ and $\phi$. However, the following theory is also applicable for a rectangular domain with mixed Dirichlet and periodic boundary conditions. Such a domain is considered for the numerical experiments presented herein. 

We write the bilinear form defined by $-\mathcal{L}_{\phi \phi}[\psi] \cdot \dphi$ as $c(\dphi, \psi)=\epsilon_0 \epsilon_{\perp} \Ltwoinnerndim{\nabla \dphi}{\nabla \psi}{\Omega}{3} + \epsilon_0 \epsilon_a \Ltwoinner{\kdirector \cdot \nabla \dphi}{\kdirector \cdot \nabla \psi}{\Omega}$ and the form associated with $\mathcal{L}_{\lambda \director}[\gamma] \cdot \ddirector$ as $b(\ddirector, \gamma)$. Further, we decompose the bilinear form defined by $\mathcal{L}_{\director \director}[\vec{v}] \cdot \ddirector$ into a free elastic term, $\tilde{a}(\ddirector, \vec{v})$, and an electric component as
\begin{equation*}
a(\ddirector, \vec{v}) = \tilde{a}(\ddirector, \vec{v}) - \epsilon_0 \epsilon_a \Ltwoinner{\ddirector \cdot \nabla \kphi}{\vec{v} \cdot \nabla \kphi}{\Omega}.
\end{equation*}
\begin{lemma} \label{cpositivedefinite}
Let $\Omega$ be a connected, open, bounded domain. If $\epsilon_a \geq 0$, then $c(\dphi, \psi)$ is a coercive bilinear form. For $\epsilon_a < 0$, if $\ltwonorm{\kdirector}^2 \leq \beta < \epsilon_{\perp}/\vert \epsilon_a \vert$, then $c(\dphi, \psi)$ is a coercive bilinear form.
\end{lemma}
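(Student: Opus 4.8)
The plan is to establish coercivity directly from its definition: I will produce a constant $C>0$ with $c(\dphi,\dphi)\ge C\,\Honenorm{\dphi}{\Omega}^2$ for every $\dphi$ in the test space $\Honebnot{\Omega}$. The first step is to evaluate the form on the diagonal,
\begin{equation*}
c(\dphi,\dphi)=\epsilon_0\epsilon_{\perp}\,\Ltwonormndim{\nabla\dphi}{\Omega}{3}^2+\epsilon_0\epsilon_a\,\Ltwonorm{\kdirector\cdot\nabla\dphi}{\Omega}^2,
\end{equation*}
and to observe that the first summand is manifestly nonnegative while the sign of the second is governed entirely by $\epsilon_a$. The two assertions of the lemma correspond exactly to the two sign regimes, and in each I reduce the problem to bounding $c(\dphi,\dphi)$ below by a strictly positive multiple of the gradient seminorm $\Ltwonormndim{\nabla\dphi}{\Omega}{3}^2$.

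If $\epsilon_a\ge 0$, I would simply discard the nonnegative anisotropy term to obtain $c(\dphi,\dphi)\ge\epsilon_0\epsilon_{\perp}\,\Ltwonormndim{\nabla\dphi}{\Omega}{3}^2$. If instead $\epsilon_a<0$, the anisotropy term is negative and must be absorbed: I would control it with the pointwise Cauchy--Schwarz estimate $\ltwonorm{\kdirector\cdot\nabla\dphi}^2\le\ltwonorm{\kdirector}^2\,\ltwonorm{\nabla\dphi}^2$, integrate over $\Omega$, and insert the hypothesis $\ltwonorm{\kdirector}^2\le\beta$ to get $\Ltwonorm{\kdirector\cdot\nabla\dphi}{\Omega}^2\le\beta\,\Ltwonormndim{\nabla\dphi}{\Omega}{3}^2$. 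Writing $\epsilon_a=-\vert\epsilon_a\vert$, this yields
\begin{equation*}
c(\dphi,\dphi)\ge\epsilon_0\big(\epsilon_{\perp}-\vert\epsilon_a\vert\beta\big)\,\Ltwonormndim{\nabla\dphi}{\Omega}{3}^2,
\end{equation*}
and the strict bound $\beta<\epsilon_{\perp}/\vert\epsilon_a\vert$ makes the leading constant strictly positive. Both cases therefore terminate with a positive multiple of the gradient seminorm.

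The final step, common to both cases, is to upgrade this gradient lower bound to a genuine $H^1$ estimate. Since $\Omega$ is connected, open, and bounded and the test functions $\dphi\in\Honebnot{\Omega}$ carry homogeneous Dirichlet data, a Poincar\'e--Friedrichs inequality bounds $\Ltwonorm{\dphi}{\Omega}$ by $\Ltwonormndim{\nabla\dphi}{\Omega}{3}$, so the gradient seminorm is equivalent to the full norm $\Honenorm{\dphi}{\Omega}$; coercivity then follows with $C$ depending on $\epsilon_0$, the dielectric constants, $\beta$, and the Poincar\'e constant.

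I expect the main obstacle to be conceptual rather than computational, located precisely at the two points where the hypotheses are genuinely used. The first is verifying that the essential-supremum bound $\ltwonorm{\kdirector}^2\le\beta$ passes cleanly through the pointwise Cauchy--Schwarz step and the subsequent integration, so that the negative term is absorbed without any loss of regularity on the current iterate $\kdirector$. The second is confirming that the Poincar\'e inequality is actually available in this setting --- which is exactly why the lemma assumes $\Omega$ connected and why the statement is framed on the homogeneous test-function space $\Honebnot{\Omega}$ rather than on the inhomogeneous space $\Honeb{\Omega}$; the sign-tracking ensuring $\epsilon_{\perp}-\vert\epsilon_a\vert\beta>0$ is the heart of the negative-anisotropy case but is routine once the Cauchy--Schwarz reduction is established.
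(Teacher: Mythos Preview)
Your proposal is correct and follows essentially the same route as the paper's own proof: both split on the sign of $\epsilon_a$, drop the nonnegative anisotropy term in the first case, absorb it via the pointwise Cauchy--Schwarz bound $(\kdirector\cdot\nabla\xi)^2\le\ltwonorm{\kdirector}^2\ltwonorm{\nabla\xi}^2\le\beta\ltwonorm{\nabla\xi}^2$ in the second, and then pass from the gradient seminorm to the full $H^1$ norm via the Poincar\'e--Friedrichs inequality on $\Honebnot{\Omega}$. The only cosmetic difference is that you defer the Poincar\'e step to a single closing paragraph common to both cases, whereas the paper invokes it within each case separately.
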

\begin{proof}
The proof is split into two cases.
\begin{caseof}
\case{$\epsilon_a \geq 0$.}
Note that $\dphi, \psi \in \Honebnot{\Omega}$, with homogeneous Dirichlet boundary conditions. By the classical Poincar\'{e}-Friedrichs' inequality, there exists a $C_1 > 0$ such that for all $\xi \in \Honenot{\Omega}$, $\Ltwonormndim{\xi}{\Omega}{3}^2 \leq C_1 \Ltwonormndim{\nabla \xi}{\Omega}{3}^2$. Therefore,
\begin{equation*}
\Honenorm{\xi}{\Omega}^2 \leq (C_1+1) \Ltwonormndim{\nabla \xi}{\Omega}{3}^2.
\end{equation*}
This implies that, for $\xi \neq 0$,
\begin{align*}
c(\xi, \xi) &= \epsilon_0 \epsilon_{\perp} \Ltwoinnerndim{\nabla \xi}{\nabla \xi}{\Omega}{3} + \epsilon_0 \epsilon_a \Ltwoinner{\kdirector \cdot \nabla \xi}{\kdirector \cdot \nabla \xi}{\Omega} \nonumber \\
& \geq \frac{\epsilon_0 \epsilon_{\perp}}{C_1+1} \Honenorm{\xi}{\Omega}^2 > 0.
\end{align*}
\case{$\epsilon_a < 0$.}
Observe that pointwise,
\begin{equation*}
(\kdirector \cdot \nabla \xi)^2 \leq \ltwonorm{\kdirector}^2 \ltwonorm{\nabla \xi}^2 \leq \beta \ltwonorm{\nabla \xi}^2.
\end{equation*}
This implies that $\Ltwoinner{\kdirector \cdot \nabla \xi}{\kdirector \cdot \nabla \xi}{\Omega} \leq \beta \Ltwoinnerndim{\nabla \xi}{\nabla \xi}{\Omega}{3}$. Therefore,
\begin{equation*}
c(\xi, \xi) \geq \epsilon_0 (\epsilon_{\perp} - \beta \vert \epsilon_a \vert) \Ltwoinnerndim{\nabla \xi}{\nabla \xi}{\Omega}{3}.
\end{equation*}
Recall that $\epsilon_{\perp} > 0$. Therefore, $\beta < \epsilon_{\perp}/\vert \epsilon_a \vert$ implies that $\epsilon_{\perp} - \beta \vert \epsilon_a \vert > 0$. Thus, again applying the Poincar\'{e}-Friedrichs' inequality above for $\xi \neq 0$,
\begin{equation*}
c(\xi, \xi) \geq \frac{\epsilon_0(\epsilon_{\perp} - \beta \vert \epsilon_a \vert)}{C_1+1} \Honenorm{\xi}{\Omega}^2 > 0.
\end{equation*}
\end{caseof}
In either case, $c(\cdot, \cdot)$ is a coercive bilinear form.
\end{proof}

There are a number of discretization space triples commonly used to discretize systems such as the one defined in \eqref{ElectricPotentialNewtonSystem}, including equal order or mixed finite elements. Discretizing the Hessian in \eqref{ElectricPotentialNewtonSystem} with finite elements leads to the $3 \times 3$ block matrix
\begin{equation} \label{decomphessian}
M = \left [ \begin{array}{c c c}
A & B_1 & B_2 \\
B_1^T & -\tilde{C} & \vec{0} \\
B_2^T & \vec{0} & \vec{0}
\end{array} \right ].
\end{equation}
\begin{lemma} \label{invertibilityM}
Under the assumptions in Lemma \ref{cpositivedefinite}, if the bilinear forms $a(\cdot, \cdot)$ and $b(\cdot, \cdot)$, defined above, are coercive and weakly coercive, respectively, on the relevant discrete spaces,  the matrix in \eqref{decomphessian} is invertible.
\end{lemma}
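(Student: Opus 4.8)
The plan is to use the fact that $M$ in \eqref{decomphessian} is a square matrix, so that invertibility is equivalent to triviality of its kernel. I would let $(\ddirector_h, \dphi_h, \dlambda_h)$, with coefficient vectors $(\vec{d}, \vec{p}, \vec{\ell})$, be an arbitrary element of $\ker M$ and write out the three block equations
\begin{align*}
A\vec{d} + B_1 \vec{p} + B_2 \vec{\ell} &= \vec{0}, \\
B_1^T \vec{d} - \tilde{C}\vec{p} &= \vec{0}, \\
B_2^T \vec{d} &= \vec{0}.
\end{align*}
The goal is to force $\vec{d} = \vec{p} = \vec{\ell} = \vec{0}$ in stages, using coercivity of $a$, positive definiteness of $\tilde{C}$ (via Lemma \ref{cpositivedefinite}), and weak coercivity of $b$ in turn.

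First I would record two structural facts. Because $A$ and $\tilde{C}$ are diagonal blocks of a Hessian, they are symmetric; coercivity of $c(\cdot,\cdot)$ from Lemma \ref{cpositivedefinite} then shows $\tilde{C}$ is symmetric positive definite, hence invertible with $\tilde{C}^{-1}$ also SPD. This lets me solve the second equation as $\vec{p} = \tilde{C}^{-1} B_1^T \vec{d}$. The central step is then an energy argument: testing the first equation against $\vec{d}$ gives $\vec{d}^T A \vec{d} + \vec{d}^T B_1 \vec{p} + \vec{d}^T B_2 \vec{\ell} = 0$. The third equation annihilates the last term, since $\vec{d}^T B_2 \vec{\ell} = (B_2^T \vec{d})^T \vec{\ell} = 0$, and substituting the expression for $\vec{p}$ turns the middle term into the Schur-complement quantity $(B_1^T \vec{d})^T \tilde{C}^{-1} (B_1^T \vec{d}) \geq 0$. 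Hence $a(\ddirector_h, \ddirector_h) = \vec{d}^T A \vec{d} \leq 0$, while coercivity of $a$ gives $a(\ddirector_h, \ddirector_h) \geq \alpha \Hdcnorm{\ddirector_h}{\Omega}^2 \geq 0$, so both terms vanish and $\ddirector_h = 0$, i.e. $\vec{d} = \vec{0}$.

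Back-substitution then closes the argument. With $\vec{d} = \vec{0}$, the second equation reduces to $\tilde{C} \vec{p} = \vec{0}$, so $\vec{p} = \vec{0}$ by invertibility of $\tilde{C}$. The first equation then collapses to $B_2 \vec{\ell} = \vec{0}$, and the weak coercivity (inf-sup/LBB condition) of $b(\cdot,\cdot)$ guarantees that $B_2$ has trivial kernel, so $\vec{\ell} = \vec{0}$. Therefore $\ker M = \{\vec{0}\}$, and squareness of $M$ yields invertibility.

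The main obstacle I anticipate is bookkeeping the sign structure rather than any deep estimate: the $(2,2)$ block is $-\tilde{C}$, so it is essential that the eliminated term enters the energy identity as the positive contribution $+(B_1^T\vec{d})^T \tilde{C}^{-1}(B_1^T\vec{d})$, cooperating with the coercivity of $a$ instead of opposing it. Getting this sign correct is exactly what renders the indefinite block harmless. A secondary point to confirm is that coercivity of $a$ is invoked on the full discrete director space and not merely on $\ker B_2^T$; this is precisely the supplied hypothesis, so no separate kernel-coercivity verification is required, and the proof never needs the full (Brezzi) saddle-point machinery.
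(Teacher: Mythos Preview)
Your proof is correct. It differs from the paper's in presentation rather than substance: the paper regroups the $3\times 3$ block system into a $2\times 2$ saddle-point form by setting $B=[B_1\ B_2]$ and $C=\mathrm{diag}(\tilde C,0)$, observes that $A$ is SPD and $-C$ is symmetric negative semi-definite, and then invokes \cite[Theorem~3.1]{Benzi2}, which reduces invertibility to the condition $\ker C\cap\ker B=\{\vec 0\}$; the latter is then checked using SPD of $\tilde C$ and weak coercivity of $b$. Your direct energy/kernel argument is essentially an in-line proof of that cited theorem specialized to this block structure: the Schur-complement step $\vec d^T B_1\tilde C^{-1}B_1^T\vec d\ge 0$ together with coercivity of $a$ is exactly what drives the Benzi--Golub--Liesen result. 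The paper's route is shorter by outsourcing the linear-algebra lemma, while yours is self-contained and makes the sign mechanism (the $-\tilde C$ block contributing a \emph{positive} Schur term) explicit; neither requires the full Brezzi framework.
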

\begin{proof}
Denoting $B = \left[ \begin{array}{c c} B_1 & B_2 \end{array} \right]$ (where $B_2$ is associated with $b(\cdot, \cdot)$), and $C = \left[ \begin{array}{c c} \tilde{C} & \vec{0} \\ \vec{0} & \vec{0} \end{array} \right]$, the matrix in \eqref{decomphessian} is written as
\begin{equation*}
\left [ \begin{array}{c c}
A & B \\
B^T & -C
\end{array}
\right].
\end{equation*}
By assumption, $a(\cdot, \cdot)$ is coercive, and it is clearly symmetric \cite{Emerson1}. Therefore, the associated discretization block, $A$, is symmetric positive definite. By Lemma \ref{cpositivedefinite}, $\tilde{C}$ is symmetric positive definite, and, therefore, $-C$ is symmetric negative semi-definite. Therefore, by \cite[Theorem 3.1]{Benzi2}, if $\ker{C} \cap \ker{B} = \{\vec{0} \}$, then the matrix in \eqref{decomphessian} is invertible. Observe that
\begin{equation*}
\left[ \begin{array}{c c}
\tilde{C} & \vec{0} \\
\vec{0} & \vec{0}
\end{array} \right]
\left[ \begin{array}{c}
\vec{y} \\
\vec{z}
\end{array} \right] = 
\left [ \begin{array}{c}
\tilde{C} \vec{y} \\
\vec{0}
\end{array} \right] = \vec{0}
\end{equation*}
if and only if $\vec{y} = \vec{0}$. Then, if $\left[ \begin{array}{c c} \vec{y} & \vec{z} \end{array} \right]^T \in \ker{C} \cap \ker{B}$, $\vec{y} = \vec{0}$. However, note that
\begin{equation*}
\left [ \begin{array}{c c}
B_1 & B_2
\end{array} \right]
\left [ \begin{array}{c}
\vec{0} \\
\vec{z}
\end{array} \right] = B_2 \vec{z}.
\end{equation*}
Since $b(\cdot, \cdot)$ is weakly coercive, $B_2 \vec{z} = \vec{0}$ if and only if $\vec{z}=\vec{0}$. So $\ker{C} \cap \ker{B}=\{\vec{0}\}$.
\end{proof}
 
For the remainder of the paper, let $C_{\phi} = \displaystyle{\sup_{\vec{x} \in \Omega} \vert \nabla \kphi \vert}$. Furthermore, for $\triangulation$, a quadrilateral subdivision of $\Omega$, let $Q_p$ denote piecewise $C^0$ polynomials of degree $p \geq 1$ and $P_0$ denote the space of piecewise constants. Define a bubble space 
\begin{equation*}
V_h^b = \{\vec{v} \in C_c(\Omega)^3 : \vec{v}|_{T} =a_T b_T \kdirector |_T, \forall T \in \triangulation \},
\end{equation*}
where $C_c(\Omega)$ denotes the space of compactly supported continuous functions on $\Omega$, $b_T$ is the biquadratic bubble function \cite{Mourad1} that vanishes on $\partial T \in \triangulation$, and $a_T$ is a constant coefficient associated with $b_T$. Then the discretization spaces considered for $\dlambda$ and $\ddirector$, respectively, are
\begin{align}
\Pi_h &= P_0, \label{pispace} \\
V_h &= \{ \vec{v} \in Q_m \times Q_m \times Q_m \oplus V_h^b : \vec{v} = \vec{0} \text{ on } \partial \Omega \}. \label{vspace}
\end{align}
Note that Lemma 3.12 in \cite{Emerson1} uses these spaces to show that $b(\cdot, \cdot)$ is weakly coercive. The above lemma now allows for the formulation of the following theorem using the discrete spaces above.


\begin{theorem} \label{elecSysInvThm}
Under the assumptions of Lemmas $3.7$ or $3.8$ in \cite{Emerson1}, for $\kappa =1$ or $\kappa$ satisfying the small data assumptions in \cite[Lemma 3.8]{Emerson1}, respectively, let $\alpha_0 > 0$ be such that $\tilde{a}(\vec{v}, \vec{v}) \geq \alpha_0 \Hdcnorm{\vec{v}}{\Omega}^2$. With the assumptions of Lemma 3.12 in \cite{Emerson1} and those of Lemma \ref{cpositivedefinite}, if $\epsilon_a \leq 0$ or $(\alpha_0 - \epsilon_0 \epsilon_a C^2_{\phi}) > 0$, then the matrix defined by \eqref{decomphessian} is invertible. 
\end{theorem}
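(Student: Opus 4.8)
The plan is to reduce the theorem to an application of Lemma \ref{invertibilityM} on the discrete pair $(V_h, \Pi_h)$ from \eqref{vspace} and \eqref{pispace}. That lemma requires three ingredients: that the hypotheses of Lemma \ref{cpositivedefinite} hold (so that $\tilde{C}$ is symmetric positive definite), that $b(\cdot,\cdot)$ is weakly coercive on the discrete spaces, and that $a(\cdot,\cdot)$ is coercive on $V_h$. The first is assumed directly in the statement, and the second is exactly the conclusion of Lemma 3.12 in \cite{Emerson1} for the chosen spaces. Thus the real work is to establish coercivity of the full director block $a(\cdot,\cdot)$, which is where the data assumption involving $\alpha_0$, $\epsilon_a$, and $C_\phi$ enters.

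To that end, I would evaluate $a$ on the diagonal,
\[
a(\vec{v},\vec{v}) = \tilde{a}(\vec{v},\vec{v}) - \epsilon_0 \epsilon_a \Ltwoinner{\vec{v}\cdot\nabla\kphi}{\vec{v}\cdot\nabla\kphi}{\Omega},
\]
and use the hypothesis $\tilde{a}(\vec{v},\vec{v}) \geq \alpha_0 \Hdcnorm{\vec{v}}{\Omega}^2$, supplied by Lemmas 3.7 or 3.8 of \cite{Emerson1}, to control the free elastic contribution. The argument then splits on the sign of $\epsilon_a$. When $\epsilon_a \leq 0$, the electric term $-\epsilon_0\epsilon_a \Ltwonorm{\vec{v}\cdot\nabla\kphi}{\Omega}^2$ is nonnegative, so it may simply be dropped and coercivity with constant $\alpha_0$ follows immediately.

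The quantitative case is $\epsilon_a > 0$, where the electric term carries the unfavorable sign and must be absorbed. Here I would bound it pointwise by Cauchy--Schwarz, $(\vec{v}\cdot\nabla\kphi)^2 \leq \ltwonorm{\vec{v}}^2 \ltwonorm{\nabla\kphi}^2 \leq C_\phi^2 \ltwonorm{\vec{v}}^2$, using the definition $C_\phi = \sup_{\vec{x}\in\Omega}\ltwonorm{\nabla\kphi}$, and then integrate to obtain $\Ltwonorm{\vec{v}\cdot\nabla\kphi}{\Omega}^2 \leq C_\phi^2 \Ltwonormndim{\vec{v}}{\Omega}{3}^2 \leq C_\phi^2 \Hdcnorm{\vec{v}}{\Omega}^2$, the last inequality because the squared $L^2$ norm is one of the summands in the squared $DC$ norm. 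Combining the two estimates gives $a(\vec{v},\vec{v}) \geq (\alpha_0 - \epsilon_0\epsilon_a C_\phi^2)\Hdcnorm{\vec{v}}{\Omega}^2$, so the data assumption $(\alpha_0 - \epsilon_0\epsilon_a C_\phi^2) > 0$ yields coercivity.

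With $a(\cdot,\cdot)$ coercive, $b(\cdot,\cdot)$ weakly coercive, and the hypotheses of Lemma \ref{cpositivedefinite} in force, all assumptions of Lemma \ref{invertibilityM} are met and invertibility of the matrix in \eqref{decomphessian} follows. I expect the only real subtlety to be the $\epsilon_a > 0$ case: coercivity survives precisely when the destabilizing electric contribution, measured through $C_\phi$, is small relative to the elastic coercivity constant $\alpha_0$. The finiteness of $C_\phi$, which makes the estimate usable, is implicitly a regularity assumption on the current potential iterate $\kphi$, and the two-case structure of the statement is dictated exactly by whether the electric term helps or hinders the bound.
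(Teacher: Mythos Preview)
Your proposal is correct and follows essentially the same approach as the paper: reduce to Lemma \ref{invertibilityM}, invoke Lemma 3.12 of \cite{Emerson1} for weak coercivity of $b(\cdot,\cdot)$, and establish coercivity of $a(\cdot,\cdot)$ by splitting on the sign of $\epsilon_a$, using the pointwise Cauchy--Schwarz bound $(\vec{v}\cdot\nabla\kphi)^2 \leq C_\phi^2\ltwonorm{\vec{v}}^2$ in the $\epsilon_a>0$ case. The paper's proof is the same in structure and detail.
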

\begin{proof}
If $\kappa =1$, Lemma 3.7 in \cite{Emerson1} implies that such an $\alpha_0 > 0$ exists. Similarly, if $\kappa$ satisfies the small data assumptions in \cite[Lemma 3.8]{Emerson1}, then such an $\alpha_0 > 0$ also exists. If $\epsilon_a \leq 0$, clearly this implies that $a(\cdot, \cdot)$ is coercive. For $\epsilon_a > 0$, note that
\begin{align}
\Ltwoinner{\vec{v} \cdot \nabla \kphi}{\vec{v} \cdot \nabla \kphi}{\Omega} = \int_{\Omega} (\vec{v} \cdot \nabla \kphi)^2 \diff{V} &\leq \int_{\Omega} \vert \vec{v} \vert^2 \vert \nabla \kphi \vert^2 \diff{V} \nonumber \\
&\leq C_{\phi}^2 \int_{\Omega} \vert \vec{v} \vert^2 \diff{V} \nonumber \\
&\leq C_{\phi}^2 \Hdcnorm{\vec{v}}{\Omega}^2 \label{boundvgradphi}.
\end{align}
Hence,
\begin{equation} \label{bounde0posea}
\vert \epsilon_0 \epsilon_a \Ltwoinner{\vec{v} \cdot \nabla \kphi}{\vec{v} \cdot \nabla \kphi}{\Omega} \vert \leq \epsilon_0 \epsilon_a C_{\phi}^2 \Hdcnorm{\vec{v}}{\Omega}^2.
\end{equation}
Therefore,
\begin{align*}
a(\vec{v}, \vec{v}) &\geq \alpha_0 \Hdcnorm{\vec{v}}{\Omega}^2 - \epsilon_0 \epsilon_a C_{\phi}^2 \Hdcnorm{\vec{v}}{\Omega}^2 \\
& = (\alpha_0 - \epsilon_0 \epsilon_a C_{\phi}^2) \Hdcnorm{\vec{v}}{\Omega}^2. 
\end{align*}
Thus, if $(\alpha_0 - \epsilon_0 \epsilon_a C^2_{\phi}) > 0$, $a(\cdot, \cdot)$ is coercive.

Finally, Lemma 3.12 in \cite{Emerson1} asserts that $b(\cdot, \cdot)$ is weakly coercive. Hence, Lemma \ref{invertibilityM} implies that $M$, as defined in \eqref{decomphessian}, is invertible.
\end{proof}

Theorem \ref{elecSysInvThm} implies that no additional inf-sup condition for $\phi$ is necessary to guarantee uniqueness of the solution to the system in \eqref{ElectricPotentialNewtonSystem}. Moreover, the discretization space for $\phi$ may be freely chosen without concern for stability.

\section{Flexoelectric Augmentation} \label{flexoaugmentation}

Flexoelectricity is a property demonstrated by certain dielectric materials, including liquid crystals. It is a spontaneous polarization of the liquid crystal induced by present curvature; it is caused by shape asymmetry of the constituent molecules of the liquid crystal material. The initial suggestion of this type of property in liquid crystals was introduced by Meyer \cite{Meyer1}. This phenomenon can, for instance, be useful in the conversion of mechanical energy to electrical energy via large deformations of the boundary containing a liquid crystal sample \cite{Jakli}. It can also play a significant role in determining the equilibrium states of liquid crystal samples with patterned surface boundaries. For example, it is an important effect in the bistable configuration of the Zenithal Bistable Device (ZBD) \cite{Davidson1}. 

The effect of flexoelectricity on the alignment of a liquid crystal bulk is modeled by an augmentation of the electric displacement vector $\vec{D}$, discussed above, and additional terms for the bulk free energy functional. The electric displacement vector is modified \cite{Elston1} such that
\begin{equation*} \label{flexodisplacement}
\vec{D} = \epsilon_0 \epsilon_{\perp} \vec{E} + \epsilon_0 \epsilon_a (\director \cdot \vec{E}) \director + \Pflex.
\end{equation*}
Following the notation and sign convention of Rudquist \cite{Rudquist1} we write
\begin{equation} \label{Pflexovector}
\Pflex = e_s \director (\diverg \director) + e_b (\director \times \curl \director),
\end{equation}
where $e_s$ and $e_b$ are material constants specific to the liquid crystal. It is also common in physics literature to denote these constants as $e_1$ and $e_3$ under a separate sign convention \cite{Elston1, deGennes1, Meyer1}.

As expressed in \cite{Elston1}, the free energy density due to the additional flexoelectric effects is
\begin{align} \label{flexodensity}
- \Pflex \cdot \vec{E}.
\end{align}
Therefore, using \eqref{Pflexovector} and \eqref{flexodensity}, the additional free energy contributed by flexoelectric polarization is given as
\begin{equation*}
- \int_{\Omega} e_s (\diverg \director)(\vec{E} \cdot \director) + e_b (\director \times \curl \director) \cdot \vec{E} \diff{V}.
\end{equation*}
Substituting an electric potential function, $\vec{E} = -\nabla \phi$, the flexoelectric free energy functional to be minimized is expressed,
\begin{align}
\mathcal{F}_3 (\director, \phi) &= \mathcal{F}_1(\director, \phi) + 2e_s \Ltwoinner{\diverg \director}{\director \cdot \nabla \phi}{\Omega} + 2e_b\Ltwoinnerndim{\director \times \curl \director}{\nabla \phi}{\Omega}{3} \label{flexoelectricfunctional}.
\end{align}
Note that the redefinition of $\vec{D}$ applies purely to the computation of Gauss' Law and does not change the electric energy in $\mathcal{F}_1(\director, \phi)$. As above, in the presence of full Dirichlet or mixed Dirichlet and periodic boundary conditions on a rectangular domain, the simplification in \eqref{stronganchoringdivthm} is applied to eliminate the $(K_2 + K_4)$ terms from \eqref{flexoelectricfunctional}. Additionally, note that the Maxwell's equations, $\diverg \vec{D} = 0$ and $\curl \vec{E} = \vec{0}$, must still be satisfied. As before, the use of the electric potential implies that Faraday's law is automatically satisfied, and it can be shown that a minimizing triple $(\director_{*}, \phi_{*}, \lambda_{*})$ for the extended functional, \eqref{flexoelectricfunctional}, satisfies Gauss' law in weak form.

\subsection{Flexoelectric System}

With the goal of minimizing $\mathcal{F}_3$ subject to the local unit length constraint, define the flexoelectric Lagrangian
\begin{equation}
\hat{\mathcal{L}}(\director, \phi, \lambda) = \mathcal{F}_3(\director, \phi) + \int_{\Omega} \lambda(\ltwoinner{\director}{\director}-1) \diff{V} \label{flexoLagrangian}.
\end{equation}
As in Section \ref{electricfirstorderconditions}, in order to minimize \eqref{flexoLagrangian}, G\^{a}teaux derivatives for $\hat{\mathcal{L}}(\director, \phi, \lambda)$ must be computed. Derivation of this variational system is identical to that of the simple electric conditions with the exception of the derivative calculations for the additional flexoelectric energy terms. Therefore, the complete flexoelectric variational system is
\begin{align*}
\hat{\mathcal{L}}_{\director}[\vec{v}] &= \lagdivn +2e_s\big( \Ltwoinner{\diverg \director}{\vec{v} \cdot \nabla \phi}{\Omega} + \Ltwoinner{\diverg \vec{v}}{\director \cdot \nabla \phi}{\Omega} \big) \nonumber \\
& \qquad + 2e_b\big( \Ltwoinnerndim{\director \times \curl \vec{v}}{\nabla \phi}{\Omega}{3} + \Ltwoinnerndim{\vec{v} \times \curl \director}{\nabla \phi}{\Omega}{3} \big) = 0, \hspace{1cm} \forall \vec{v} \in \Hdcnot, \\
\hat{\mathcal{L}}_{\phi}[\psi] &= \lagdivphi + 2e_s \Ltwoinner{\diverg \director}{\director \cdot \nabla \psi}{\Omega} +2e_b  \Ltwoinnerndim{\director \times \curl \director}{\nabla \psi}{\Omega}{3} = 0, \hspace{.5cm} \forall \psi \in \Honebnot{\Omega}, \\
\hat{\mathcal{L}}_{\lambda}[\gamma] &= \lagdivlam =0, \hspace{7.49cm} \forall \gamma \in \Ltwo{\Omega}. 
\end{align*}

Constructing the Newton iterations to address the nonlinearities, as above, yields a Newton linearization system with a saddle-point structure similar to that of the electric field case. Since the flexoelectric energy terms are first-order with respect to $\phi$ and do not depend of $\lambda$, many of the second order derivatives are the same as the simple electric case. On the other hand, the mixed partial derivatives involving $\phi$ contain additional terms,
\begin{align*}
\hat{\mathcal{L}}_{\phi \director}[\psi] \cdot \ddirector &= \mathcal{L}_{\phi \director}[\psi] \cdot \ddirector + 2e_s \big( \Ltwoinner{\diverg \ddirector}{\kdirector \cdot \nabla \psi}{\Omega} + \Ltwoinner{\diverg \kdirector}{\ddirector \cdot \nabla \psi}{\Omega} \big)\nonumber \\
& \qquad+ 2e_b \big( \Ltwoinnerndim{\kdirector \times \curl \ddirector}{\nabla \psi}{\Omega}{3}+ \Ltwoinnerndim{\ddirector \times \curl \kdirector}{\nabla \psi}{\Omega}{3} \big), \\
\hat{\mathcal{L}}_{\director \phi}[\vec{v}] \cdot \dphi &= \mathcal{L}_{\director \phi}[\vec{v}] \cdot \dphi + 2e_s \big( \Ltwoinner{\diverg \kdirector}{\vec{v} \cdot \nabla \dphi}{\Omega} + \Ltwoinner{\diverg \vec{v}}{\kdirector \cdot \nabla \dphi}{\Omega} \big) \nonumber \\
& \qquad+ 2e_b \big(\Ltwoinnerndim{\kdirector \times \curl \vec{v}}{\nabla \dphi}{\Omega}{3} + \Ltwoinnerndim{\vec{v} \times \curl \kdirector}{\nabla \dphi}{\Omega}{3} \big).
\end{align*}
Finally, the second order derivative with respect to $\director$ also contains additional terms,
\begin{align*}
\hat{\mathcal{L}}_{\director \director}[\vec{v}] \cdot \ddirector &= \mathcal{L}_{\director \director}[\vec{v}] \cdot \ddirector + 2e_s \big(\Ltwoinner{\diverg \ddirector}{\vec{v} \cdot \nabla \kphi}{\Omega} + \Ltwoinner{\diverg \vec{v}}{\ddirector \cdot \nabla \kphi}{\Omega} \big) \nonumber \\
& \qquad + 2e_b \big( \Ltwoinnerndim{\ddirector \times \curl \vec{v}}{\nabla \kphi}{\Omega}{3} + \Ltwoinnerndim{\vec{v} \times \curl \ddirector}{\nabla \kphi}{\Omega}{3} \big).
\end{align*}
Completing the system in \eqref{ElectricPotentialNewtonSystem} with the above Hessian and right hand side computations yields the flexoelectric linearized variational system.

\subsection{Well-Posedness of the Discrete Flexoelectric Systems} \label{flexoinvertibility}

As with the simple electric linearization, finite elements are used to numerically approximate the updates as $\ddirector_h$, $\dphi_h$, and $\dlambda_h$. For simplicity, throughout this section we assume that full Dirichlet boundary conditions are enforced for $\director$ and $\phi$. However, the theory is, as above, also applicable for a rectangular domain with mixed Dirichlet and periodic boundary conditions. As in the simple electric case, we define bilinear forms to represent relevant components of the computed Hessian. The bilinear forms associated with $-\hat{\mathcal{L}}_{\phi \phi}[\psi] \cdot \dphi$ and $\hat{\mathcal{L}}_{\lambda \director}[\gamma] \cdot \ddirector$ are denoted $c(\dphi, \psi)$ and $b(\ddirector, \gamma)$, respectively, and are identical to the corresponding components of the simple electric case above. We again decompose the bilinear form defined by $\hat{\mathcal{L}}_{\director \director}[\vec{v}] \cdot \ddirector$ into a free elastic term, $\tilde{a}(\ddirector, \vec{v})$, and a flexoelectric component as
\begin{align*}
a(\ddirector, \vec{v}) &= \tilde{a}(\ddirector, \vec{v}) - \epsilon_0 \epsilon_a \Ltwoinner{\ddirector \cdot \nabla \kphi}{\vec{v} \cdot \nabla \kphi}{\Omega} \\
&\qquad + e_s \big(\Ltwoinner{\diverg \ddirector}{\vec{v} \cdot \nabla \kphi}{\Omega} + \Ltwoinner{\diverg \vec{v}}{\ddirector \cdot \nabla \kphi}{\Omega} \big) \nonumber \\
&\qquad+ e_b \big( \Ltwoinnerndim{\ddirector \times \curl \vec{v}}{\nabla \kphi}{\Omega}{3}+ \Ltwoinnerndim{\vec{v} \times \curl \ddirector}{\nabla \kphi}{\Omega}{3} \big).
\end{align*}
Recalling that $C_{\phi} = \displaystyle{\sup_{\vec{x} \in \Omega} \vert \nabla \kphi \vert}$, we formulate the following lemma.
\begin{lemma} \label{flexoacoercivity}
Under the assumptions of Lemma 3.7 or 3.8 from \cite{Emerson1}, let $\alpha_0 > 0$ be such that $\tilde{a}(\vec{v}, \vec{v}) \geq \alpha_0 \Hdcnorm{\vec{v}}{\Omega}^2$. If $\epsilon_a \leq 0$ and $\alpha_0 > 2 C_{\phi} (\vert e_b \vert + \vert e_s \vert)$ or $\epsilon_a > 0$ and $\alpha_0 > \epsilon_0 \epsilon_a C_{\phi}^2 + 2 C_{\phi}(\vert e_b \vert  + \vert e_s \vert )$, then there exists an $\alpha_1 > 0$ such that $a(\vec{v}, \vec{v}) \geq \alpha_1 \Hdcnorm{\vec{v}}{\Omega}^2$.
\end{lemma}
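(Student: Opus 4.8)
The plan is to establish coercivity directly by setting $\ddirector = \vec{v}$ in $a(\cdot,\cdot)$ and controlling the three perturbation terms that distinguish $a(\cdot,\cdot)$ from the free elastic form $\tilde{a}(\cdot,\cdot)$. Equating the arguments collapses the symmetric pairs, giving
\begin{align*}
a(\vec{v}, \vec{v}) &= \tilde{a}(\vec{v}, \vec{v}) - \epsilon_0 \epsilon_a \Ltwoinner{\vec{v} \cdot \nabla \kphi}{\vec{v} \cdot \nabla \kphi}{\Omega} + 2 e_s \Ltwoinner{\diverg \vec{v}}{\vec{v} \cdot \nabla \kphi}{\Omega} \\
&\qquad + 2 e_b \Ltwoinnerndim{\vec{v} \times \curl \vec{v}}{\nabla \kphi}{\Omega}{3}.
\end{align*}
I would then invoke the hypothesis $\tilde{a}(\vec{v}, \vec{v}) \geq \alpha_0 \Hdcnorm{\vec{v}}{\Omega}^2$ and bound the absolute values of the remaining three terms from above by multiples of $\Hdcnorm{\vec{v}}{\Omega}^2$.

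For the electric term I would split on the sign of $\epsilon_a$ exactly as in the proof of Theorem \ref{elecSysInvThm}: when $\epsilon_a \leq 0$ the quantity $-\epsilon_0 \epsilon_a \Ltwoinner{\vec{v} \cdot \nabla \kphi}{\vec{v} \cdot \nabla \kphi}{\Omega}$ is non-negative and may simply be discarded, while for $\epsilon_a > 0$ the estimate \eqref{bounde0posea} subtracts at most $\epsilon_0 \epsilon_a C_{\phi}^2 \Hdcnorm{\vec{v}}{\Omega}^2$. For the two flexoelectric cross terms I would use the pointwise Cauchy--Schwarz bounds $\vert \vec{v} \cdot \nabla \kphi \vert \leq C_{\phi} \vert \vec{v} \vert$ and $\vert (\vec{v} \times \curl \vec{v}) \cdot \nabla \kphi \vert \leq C_{\phi} \vert \vec{v} \vert \, \vert \curl \vec{v} \vert$, followed by the $L^2$ Cauchy--Schwarz inequality, to obtain $\vert 2 e_s \Ltwoinner{\diverg \vec{v}}{\vec{v} \cdot \nabla \kphi}{\Omega} \vert \leq 2 \vert e_s \vert C_{\phi} \Ltwonorm{\diverg \vec{v}}{\Omega} \Ltwonormndim{\vec{v}}{\Omega}{3}$ and $\vert 2 e_b \Ltwoinnerndim{\vec{v} \times \curl \vec{v}}{\nabla \kphi}{\Omega}{3} \vert \leq 2 \vert e_b \vert C_{\phi} \Ltwonormndim{\vec{v}}{\Omega}{3} \Ltwonormndim{\curl \vec{v}}{\Omega}{3}$.

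The crux is handling these cross terms cleanly: unlike the electric term, they couple $\diverg \vec{v}$ (respectively $\curl \vec{v}$) with $\vec{v}$ itself, so no single component of the $DC$-norm controls them. The key observation is that each factor $\Ltwonorm{\diverg \vec{v}}{\Omega}$, $\Ltwonormndim{\curl \vec{v}}{\Omega}{3}$, and $\Ltwonormndim{\vec{v}}{\Omega}{3}$ is individually bounded by $\Hdcnorm{\vec{v}}{\Omega}$, so each product is at most $\Hdcnorm{\vec{v}}{\Omega}^2$; this is what produces the linear factor $2 C_{\phi}(\vert e_s \vert + \vert e_b \vert)$ rather than a squared constant. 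Collecting the estimates yields $a(\vec{v}, \vec{v}) \geq \alpha_1 \Hdcnorm{\vec{v}}{\Omega}^2$ with $\alpha_1 = \alpha_0 - 2 C_{\phi}(\vert e_s \vert + \vert e_b \vert)$ when $\epsilon_a \leq 0$ and $\alpha_1 = \alpha_0 - \epsilon_0 \epsilon_a C_{\phi}^2 - 2 C_{\phi}(\vert e_s \vert + \vert e_b \vert)$ when $\epsilon_a > 0$; in each case the stated hypothesis on $\alpha_0$ is precisely what forces $\alpha_1 > 0$. I expect the bookkeeping of these two sign cases, rather than any deep estimate, to be the only place requiring care.
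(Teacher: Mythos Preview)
Your proposal is correct and follows essentially the same route as the paper's proof: split on the sign of $\epsilon_a$, discard or bound the electric term via \eqref{bounde0posea}, and control each flexoelectric cross term by $2C_\phi\vert e_s\vert\Hdcnorm{\vec{v}}{\Omega}^2$ and $2C_\phi\vert e_b\vert\Hdcnorm{\vec{v}}{\Omega}^2$ using Cauchy--Schwarz and the definition of $C_\phi$. The only cosmetic difference is that the paper handles the $e_b$ term by first cycling the triple product to $\Ltwoinnerndim{\vec{v}}{(\curl\vec{v})\times\nabla\kphi}{\Omega}{3}$ before bounding, whereas you bound $\vert(\vec{v}\times\curl\vec{v})\cdot\nabla\kphi\vert$ directly; both yield the same estimate.
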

\begin{proof}
The proof is split into two cases.
\begin{caseof}
\case{$\epsilon_a \leq 0$.}{Since $\epsilon_0>0$ and $\Ltwoinner{\vec{v} \cdot \nabla \kphi}{\vec{v} \cdot \nabla \kphi}{\Omega}$ is clearly positive definite,
\begin{equation}
\tilde{a} (\vec{v}, \vec{v}) - \epsilon_0 \epsilon_a \Ltwoinner{\vec{v} \cdot \nabla \kphi}{\vec{v} \cdot \nabla \kphi}{\Omega} \geq \alpha_0 \Hdcnorm{\vec{v}}{\Omega}^2.
\end{equation}
Note that
\begin{align*}
\vert 2 e_s \Ltwoinner{\diverg \vec{v}}{\vec{v} \cdot \nabla \kphi}{\Omega} \vert &\leq 2 \vert e_s \vert \Ltwonorm{\diverg \vec{v}}{\Omega} \Ltwonorm{\vec{v} \cdot \nabla \kphi}{\Omega} \\
& \leq 2 \vert e_s \vert \Hdcnorm{\vec{v}}{\Omega} \Ltwonorm{\vec{v} \cdot \nabla \kphi}{\Omega}.
\end{align*}
Furthermore, from \eqref{boundvgradphi},
\begin{align*}
 \Ltwonorm{\vec{v} \cdot \nabla \kphi}{\Omega}^2 \leq C^2_{\phi} \Hdcnorm{\vec{v}}{\Omega}^2.
\end{align*}
Hence,
\begin{align} \label{boundesnegea}
\vert 2 e_s \Ltwoinner{\diverg \vec{v}}{\vec{v} \cdot \nabla \kphi}{\Omega} \vert \leq 2 C_{\phi} \vert e_s \vert  \Hdcnorm{\vec{v}}{\Omega}^2.
\end{align}
Bounding the second relevant term,
\begin{align*}
\vert 2 e_b \Ltwoinner{\vec{v} \times \curl \vec{v}}{\nabla \kphi}{\Omega} \vert &\leq 2 \vert e_b \vert \vert \Ltwoinner{\vec{v}}{(\curl \vec{v}) \times \nabla \kphi}{\Omega} \vert \\
& \leq 2 \vert e_b \vert \Ltwonorm{\vec{v}}{\Omega} \Ltwonorm{(\curl \vec{v}) \times \nabla \kphi}{\Omega}.
\end{align*}
Pointwise,
\begin{equation*}
\vert (\curl \vec{v}) \times \nabla \kphi \vert^2 \leq \vert \curl \vec{v} \vert^2 \vert \nabla \kphi \vert^2.
\end{equation*}
Therefore,
\begin{align*}
\Ltwonorm{(\curl \vec{v}) \times \nabla \kphi}{\Omega}^2 = \int_{\Omega} \vert (\curl \vec{v}) \times \nabla \kphi \vert^2 \diff{V} &\leq \int_{\Omega} \vert \curl \vec{v} \vert^2 \vert \nabla \kphi \vert^2 \diff{V} \\
& \leq C_{\phi}^2 \int_{\Omega} \vert \curl \vec{v} \vert^2 \diff{V} \\
& \leq C_{\phi}^2 \Ltwonorm{\curl \vec{v}}{\Omega}^2 \leq C^2_{\phi} \Hdcnorm{\vec{v}}{\Omega}^2.
\end{align*}
Thus,
\begin{align}
\vert 2 e_b \Ltwoinner{\vec{v} \times \curl \vec{v}}{\nabla \kphi}{\Omega} \vert & \leq 2  C_{\phi} \vert e_b \vert \Ltwonorm{\vec{v}}{\Omega} \Ltwonorm{\curl \vec{v}}{\Omega} \nonumber \\
&\leq 2 C_{\phi} \vert e_b \vert \Hdcnorm{\vec{v}}{\Omega}^2. \label{boundebneqea}
\end{align}
Gathering the bounds in \eqref{boundesnegea}-\eqref{boundebneqea},
\begin{align*}
a(\vec{v}, \vec{v}) &\geq \alpha_0 \Hdcnorm{\vec{v}}{\Omega}^2 - 2 \vert e_b \vert C_{\phi} \Hdcnorm{\vec{v}}{\Omega}^2 - 2 \vert e_s \vert C_{\phi} \Hdcnorm{\vec{v}}{\Omega}^2 \\
&=(\alpha_0 - 2 C_{\phi} (\vert e_b \vert  + \vert e_s \vert)) \Hdcnorm{\vec{v}}{\Omega}^2.
\end{align*}
Then, set $\alpha_1 = \alpha_0 - 2C_{\phi} (\vert e_b \vert + \vert e_s \vert) > 0$.
}
\case{$\epsilon_a > 0$.}{In this case the additional term, $\Ltwoinner{\vec{v} \cdot \nabla \kphi}{\vec{v} \cdot \nabla \kphi}{\Omega}$, is important. Recall, from \eqref{bounde0posea}, that
\begin{equation} \label{bounde0posea2}
\vert \epsilon_0 \epsilon_a \Ltwoinner{\vec{v} \cdot \nabla \kphi}{\vec{v} \cdot \nabla \kphi}{\Omega} \vert \leq \epsilon_0 \epsilon_a C_{\phi}^2 \Hdcnorm{\vec{v}}{\Omega}^2.
\end{equation}
Employing the bounds in \eqref{boundesnegea}-\eqref{bounde0posea2},
\begin{align*}
a(\vec{v}, \vec{v}) &\geq \alpha_0 \Hdcnorm{\vec{v}}{\Omega}^2 - \epsilon_0 \epsilon_a C_{\phi}^2 \Hdcnorm{\vec{v}}{\Omega}^2 - 2 C_{\phi}(\vert e_b \vert  + \vert e_s \vert ) \Hdcnorm{\vec{v}}{\Omega}^2 \\
& = (\alpha_0 - (\epsilon_0 \epsilon_a C_{\phi}^2 + 2 C_{\phi}(\vert e_b \vert  + \vert e_s \vert ))) \Hdcnorm{\vec{v}}{\Omega}^2.
\end{align*}
Thus, let $\alpha_1 = \alpha_0 - (\epsilon_0 \epsilon_a C_{\phi}^2 + 2 C_{\phi}(\vert e_b \vert  + \vert e_s \vert ))>0$.
}
\end{caseof}
\end{proof}

When discretizing the flexoelectric linearization, the $3 \times 3$ saddle-point block structure,
\begin{equation} \label{flexodecomphessian}
\bar{M} = \left [ \begin{array}{c c c}
\bar{A} & \bar{B}_1 & B_2 \\
\bar{B}_1^T & -\tilde{C} & \vec{0} \\
B_2^T & \vec{0} & \vec{0}
\end{array} \right ],
\end{equation}
described in \eqref{decomphessian} resurfaces. Blocks $B_2$ and $\tilde{C}$ are identical to those in \eqref{decomphessian} as they are discretizations of the same bilinear forms in the simple electric case. Again, making use of the discretization spaces defined in \eqref{pispace} and \eqref{vspace} above, the following theorem holds.
\begin{theorem} \label{invertibilitybarM}
Under the assumptions of Lemma 3.12 in \cite{Emerson1} and Lemmas \ref{cpositivedefinite} and \ref{flexoacoercivity}, $\bar{M}$ is invertible.
\end{theorem}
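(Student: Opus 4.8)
The plan is to recognize that $\bar{M}$ in \eqref{flexodecomphessian} has precisely the block saddle-point structure of \eqref{decomphessian} treated in Lemma \ref{invertibilityM}, with $\bar{A}$ in place of $A$ and $\bar{B}_1$ in place of $B_1$, while $B_2$ and $\tilde{C}$ are unchanged. Consequently, the abstract invertibility criterion of Lemma \ref{invertibilityM} applies verbatim once its three hypotheses are verified for the flexoelectric forms: the assumptions of Lemma \ref{cpositivedefinite} (so that $\tilde{C}$ is symmetric positive definite), coercivity of the augmented $a(\cdot,\cdot)$ (so that $\bar{A}$ is symmetric positive definite), and weak coercivity of $b(\cdot,\cdot)$.

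First I would dispatch the two unchanged ingredients. Since $c(\cdot,\cdot)$ and $b(\cdot,\cdot)$ are literally the same bilinear forms as in the simple electric case, Lemma \ref{cpositivedefinite} still yields that $\tilde{C}$ is symmetric positive definite, and Lemma 3.12 of \cite{Emerson1} still guarantees that $b(\cdot,\cdot)$ is weakly coercive on the discrete spaces \eqref{pispace}--\eqref{vspace}. In particular, the kernel computation from the proof of Lemma \ref{invertibilityM}, namely $\ker C \cap \ker B = \{\vec{0}\}$, carries over unchanged, as it depends only on $\tilde{C}$ being positive definite and on $B_2$ having trivial kernel. Next I would establish that $\bar{A}$ is symmetric positive definite. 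Positive definiteness is immediate from Lemma \ref{flexoacoercivity}, which supplies an $\alpha_1 > 0$ with $a(\vec{v},\vec{v}) \geq \alpha_1 \Hdcnorm{\vec{v}}{\Omega}^2$ under the stated sign condition on $\epsilon_a$ and the smallness of $\alpha_0$ relative to $C_\phi$, $e_s$, and $e_b$; since this bound holds on the full space, it holds a fortiori on the discrete subspace $V_h$. For symmetry, the electric portion of $a(\cdot,\cdot)$ is symmetric as in \cite{Emerson1}, while the two new flexoelectric contributions---the $e_s$ pair $\Ltwoinner{\diverg \ddirector}{\vec{v} \cdot \nabla \kphi}{\Omega} + \Ltwoinner{\diverg \vec{v}}{\ddirector \cdot \nabla \kphi}{\Omega}$ and the $e_b$ pair $\Ltwoinnerndim{\ddirector \times \curl \vec{v}}{\nabla \kphi}{\Omega}{3} + \Ltwoinnerndim{\vec{v} \times \curl \ddirector}{\nabla \kphi}{\Omega}{3}$---are each manifestly invariant under interchanging $\ddirector$ and $\vec{v}$. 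Hence $\bar{A}$ is symmetric positive definite.

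With all three hypotheses verified, a direct appeal to Lemma \ref{invertibilityM} (equivalently, to \cite[Theorem 3.1]{Benzi2} through the decomposition used there) yields invertibility of $\bar{M}$. The proof is therefore largely bookkeeping, since the substantive analytic work---the coercivity estimate for the augmented form---has already been absorbed into Lemma \ref{flexoacoercivity}. I expect the only point needing genuine (though mild) attention to be the symmetry of $a(\cdot,\cdot)$ in the presence of the new $e_s$ and $e_b$ terms, which is what certifies that the newly perturbed block $\bar{A}$ inherits the symmetric-positive-definite property required by the abstract lemma.
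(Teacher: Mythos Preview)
Your proposal is correct and follows essentially the same approach as the paper: verify that the hypotheses of Lemma~\ref{invertibilityM} hold---coercivity of $a(\cdot,\cdot)$ via Lemma~\ref{flexoacoercivity}, weak coercivity of $b(\cdot,\cdot)$ via Lemma~3.12 of \cite{Emerson1}, and coercivity of $c(\cdot,\cdot)$ via Lemma~\ref{cpositivedefinite}---and then invoke Lemma~\ref{invertibilityM} directly. Your treatment is in fact more thorough than the paper's three-line proof, since you explicitly check symmetry of the flexoelectric $a(\cdot,\cdot)$, a point the paper leaves implicit.
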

\begin{proof}
Lemma 3.12 in \cite{Emerson1} implies that the bilinear form $b(\ddirector, \gamma)$, associated with $B_2$, is weakly coercive and Lemma \ref{flexoacoercivity} implies that $a(\ddirector, \vec{v})$ is coercive. Therefore, Lemma \ref{invertibilityM} implies that $\bar{M}$ is invertible.
\end{proof}

Therefore, as in the simple electric case above, Theorem \ref{invertibilitybarM} implies that no additional inf-sup condition for $\phi$ is necessary to guarantee uniqueness of the solution to the system in \eqref{ElectricPotentialNewtonSystem}, and the discretization space for $\phi$ may be freely chosen without concern for stability.

\section{Numerical Methodology} \label{nummethodology}

The algorithm to perform the minimizations discussed in previous sections has three stages and was developed in \cite{Emerson1} for the elastic case; see Algorithm \ref{algo}. The outermost phase is nested iteration (NI) \cite{McCormick1, Starke1}, which begins on a specified coarsest grid level. Newton iterations are performed on each grid, updating the current approximation after each step. The stopping criterion for the Newton iterations at each level is based on a specified tolerance for the current approximation's conformance to the first-order optimality conditions in the standard Euclidean $l_2$ norm. The resulting approximation is then interpolated to a finer grid. The current implementation performs uniform grid refinement after each set of Newton iterations.

The linear system for each Newton step has the anticipated saddle-point block structure, detailed in \eqref{decomphessian} and \eqref{flexodecomphessian}. For the numerical experiments, the matrices are inverted using a coupled multigrid approach with Vanka-type relaxation, discussed below, in order to approximately solve for the discrete updates $\ddirector_h$, $\dphi_h$, and $\dlambda_h$. Finally, an incomplete Newton correction is performed. That is, the new iterates are given by
\begin{equation} \label{corrections}
 \left [ \begin{array}{c}
 \director_{k+1} \\
 \phi_{k+1} \\
 \lambda_{k+1} 
 \end{array} \right ]
= \left [ \begin{array}{c} 
\kdirector \\
\kphi \\
\klambda \\
\end{array} \right ] + \omega
\left [ \begin{array}{c}
\ddirector_h \\
\dphi_h \\
\dlambda_h
\end{array} \right ],
 \end{equation}
where $\omega \leq 1$. This is to ensure relatively strict adherence to the constraint manifold, which is necessary for the invertibility discussed above. For this algorithm, $\omega$ is chosen to begin at $0.2$ on the coarsest grid and increases by $0.2$, to a maximum of $1$, after each grid refinement, so that as the approximation converges, larger Newton steps are taken. The grid management and discretizations are implemented using the deal.II finite-element library, which is an aggressively optimized and parallelized open-source library widely used in scientific computing \cite{BangerthHartmannKanschat2007, DealIIReference}. In the numerical tests to follow, $Q_2$--$Q_2$--$P_0$ discretizations are used to approximate $\ddirector_h$, $\dphi_h$, and $\dlambda_h$, respectively, on each grid. Note that these spaces differ slightly from those in the analysis above. However, theoretical and numerical support for the stability of $Q_2$--$P_0$ discretizations of $\ddirector$ and $\dlambda$ was given in \cite{Emerson1}. Furthermore, the lemmas proved above demonstrate that the discretization space for $\dphi$, in both the electric and flexoelectric models, may be arbitrarily chosen without regard for stability.
\vspace{.3cm}
\begin{algorithm}[H] \label{algo}
\SetAlgoLined
~\\
0. Initialize $(\director_0, \phi_0, \lambda_0)$ on coarse grid.
~\\
\While{Refinement limit not reached}
{
	\While{First-order optimality conformance threshold not satisfied}
	{
		1. Set up discrete linear system \eqref{ElectricPotentialNewtonSystem} on current grid, $H$. ~\\
		2. Solve for $\ddirector_{H}$, $\dphi_{H}$, and $\dlambda_{H}$. ~\\
		3. Compute $\director_{k+1}$, $\phi_{k+1}$, and $\lambda_{k+1}$ as in \eqref{corrections}. ~\\
	}
	4. Uniformly refine the grid. ~\\
	5. Interpolate $\director_{H} \to \director_{h}$, $\phi_{H} \to \phi_h$, and $\lambda_{H} \to \lambda_h$.
}
\caption{Newton's method minimization algorithm with NI}
\end{algorithm}
\vspace{.3cm}

\subsection{Coupled Multigrid with Vanka-type Relaxation}

Significant research into the development of efficient iterative solvers for block structures such as those arising in \eqref{decomphessian} and \eqref{flexodecomphessian} exists. Here, we discuss the implementation and results for a coupled multigrid method with Vanka-type relaxation. The performance and robustness of such methods have been studied in-depth for block linear systems pertaining to incompressible flows \cite{Larin1, Matthies1, John1}. Furthermore, these methods have been shown to achieve desirable convergence rates for systems with coupled saddle-point structures such as those in \eqref{decomphessian} and \eqref{flexodecomphessian} \cite{Benson1}. In this section, we write the general system to be solved as
\begin{equation*}
M 
\left [ \begin{array}{c}
\director \\
\phi \\
\lambda
\end{array} \right] =
\left [ \begin{array}{c c c}
A & B_1 & B_2 \\
B_1^T & -\tilde{C} & \vec{0} \\
B_2^T & \vec{0} & \vec{0}
\end{array} \right ]
\left [ \begin{array}{c}
\director \\
\phi \\
\lambda
\end{array} \right]  = 
\left [ \begin{array}{c}
f_{\director} \\
f_{\phi} \\
f_{\lambda}
\end{array} \right] ,
\end{equation*}
where $M$ represents a matrix arising for either the electric or flexoelectric models.

Due to the use of cell-centered, discontinuous finite elements for the Lagrange multiplier, the Vanka-type relaxation techniques herein, originally formulated in \cite{Vanka1} for finite-difference discretizations, are \emph{mesh-cell oriented}. Therefore, in the construction of the Vanka-type relaxation block associated with each Lagrange multiplier degree of freedom, all director and electric potential degrees of freedom associated with the same cell are considered. Let $\mathscr{N}_h$, $\mathscr{E}_h$, and $\mathscr{Q}_h$ denote the director, electric potential, and Lagrange multiplier degrees of freedom, respectively. Define $\mathscr{V}_{hj}$ to be the set of degrees of freedom associated with mesh cell $j$. Let $M_j$ be the block of matrix $M$ formed by extracting the rows and columns of $M$ corresponding to the degrees of freedom in $\mathscr{V}_{hj}$. Hence,
\begin{equation} \label{vankablock}
M_j = \left [
\begin{array}{c c c}
A_j & B_{1,j} & B_{2,j} \\
B_{1,j}^T & -\tilde{C}_{j} & \vec{0} \\
B_{2,j}^T & \vec{0} & \vec{0}
\end{array}
\right],
\end{equation}
with dimension $\vert \mathscr{V}_{hj} \vert \times \vert \mathscr{V}_{hj} \vert$. Solution values for degrees of freedom in $\mathscr{V}_{hj}$ are updated as
\begin{equation*}
\left [ \begin{array}{c}
\director_{i+1} \\
\phi_{i+1} \\
\lambda_{i+1}
\end{array} \right ]_j  = 
\left [ \begin{array}{c}
\director_{i} \\
\phi_{i} \\
\lambda_{i}
\end{array} \right ]_j 
+ \zeta M_j^{-1} 
\left ( \left [ \begin{array}{c}
f_{\director} \\
f_{\phi} \\
f_{\lambda}
\end{array} \right ]
- M
\left [ \begin{array}{c}
\director_{i} \\
\phi_{i} \\
\lambda_{i}
\end{array} \right ] 
 \right )_j,
\end{equation*}
where the subscript $j$ restricts the vectors to the appropriate rows. Thus, a single relaxation step consists of a loop over all mesh elements in the domain. 

Within the underlying multigrid method, we use standard finite-element interpolation operators and Galerkin coarsening. For additional details on the numerical implementation of the multigrid method and associated relaxation schemes, see \cite{Benson1}.

The relaxation and convergence properties of element-wise Vanka-type relaxation techniques have been studied analytically for the Poisson, Stokes, and Navier-Stokes equations in \cite{Molenaar1, Schoberl1, Manservisi1, MacLachlan1, Sivaloganathan1}. Moreover, numerical experiments have shown good performance for electrically coupled systems with similar structure to those considered here \cite{Benson1}. An ``economy" Vanka-type relaxation approach, as described in \cite{Benson1}, is also quite effective for these problems but does not prove to be as efficient as the full Vanka relaxation described above. Therefore, only the full Vanka-type relaxation scheme is considered below.

In the following section, the performance of the multigrid methods using the full Vanka-type relaxation technique is compared against that of using the UMFPACK LU decomposition \cite{TADavis1,TADavis2, TADavis3, TADavis4}, linked through the deal.II library, as an exact solver. Additionally, we consider a number of problems involving both the electric and flexoelectric models examined above and apply the coupled multigrid algorithm.

\section{Numerical Results} \label{numresults}

The general test problem in this section considers a classical domain with two parallel substrates placed at distance $d=1$ apart. The substrates run parallel to the $xz$-plane and perpendicular to the $y$-axis. It is assumed that this domain represents a uniform slab in the $xy$-plane. That is, $\director$ may have a non-zero $z$ component but $\pd{\director}{z} = \vec{0}$. Hence, we consider the 2-D domain $\Omega = \{ (x,y) \text{ } \vert \text{ } 0 \leq x,y \leq 1 \}$. The problem assumes periodic boundary conditions at the edges $x=0$ and $x=1$. Dirichlet boundary conditions are enforced on the $y$-boundaries. As discussed above, the simplification outlined in \eqref{functional3} is relevant for this domain and boundary conditions.

\subsection{Full Vanka Relaxation Studies}

In this section, we present results of relaxation parameter and solve time studies comparing the performance of the multigrid method using full Vanka-type relaxation against that of the UMFPACK LU decomposition exact solver. The studies were performed on a flexoelectric problem with relevant constants detailed in Table \ref{vankarelevantconstants}. Letting $r = 0.25$ and $s = 0.95$, the boundary conditions were
\begin{align}
n_1 &= 0, \label{nanopatterning1} \\
n_2 &= \cos\big(r(\pi + 2 \tan^{-1}(X_m) -2 \tan^{-1}(X_p))\big), \\
n_3 &= \sin\big(r(\pi + 2 \tan^{-1}(X_m) -2 \tan^{-1}(X_p))\big), \label{nanopatterning3}
\end{align}
where $X_m=\frac{-s\sin(2\pi(x+r))}{-s\cos(2\pi(x+r))-1}$ and $X_p = \frac{-s\sin(2\pi(x+r))}{-s\cos(2\pi(x+r))+1}$. Such boundary conditions are meant to simulate nano-patterned surfaces important in current research \cite{Atherton1, Atherton2}; see the substrate boundaries in Figure \ref{VankaParameterTiming}\subref{fig:right}.
Even in the absence of electric fields, such patterned surfaces result in complicated director configurations throughout the interior of $\Omega$. 
\begin{table}[h!]
\centering
{\small
\begin{tabular}{|c|c|c|c|c|c|}
\hline
Elastic Constants & $K_1 = 1$ & $K_2=4$ & $K_3= 1$ & $\kappa=4$ & $\epsilon_0 = 1.42809$ \\
\hline
Electric Constants & $\epsilon_{\parallel}=7$ & $\epsilon_{\perp}=7$ & $\epsilon_a = 0$ & $e_s = 0.5$ & $e_b = 0.5$ \\
\hline
\end{tabular}
}
\caption{\small{Relevant liquid crystal constants for Vanka-type relaxation studies.}}
\label{vankarelevantconstants}
\end{table}

The first set of studies focus on determining the optimal Vanka relaxation parameter $\zeta$. For these numerical experiments, the multigrid convergence tolerance, which is based on the ratio of the current solution's residual to that of the initial guess, is $10^{-6}$ for each grid level and Newton step. The relaxation parameter for the full Vanka approach is varied from $\zeta = 0.1$ to $\zeta = 1.1$ in increments of $0.05$. The corresponding average multigrid iteration counts for a $512 \times 512$ grid and a selection of $\zeta$ values is displayed in Figure \ref{VankaParameterTiming}\subref{fig:left} alongside the final computed solution in Figure \ref{VankaParameterTiming}\subref{fig:right}. 
\begin{figure}[h!]
\centering
\begin{subfigure}[b]{.49 \textwidth}
\raggedleft
  \includegraphics[scale=.35]{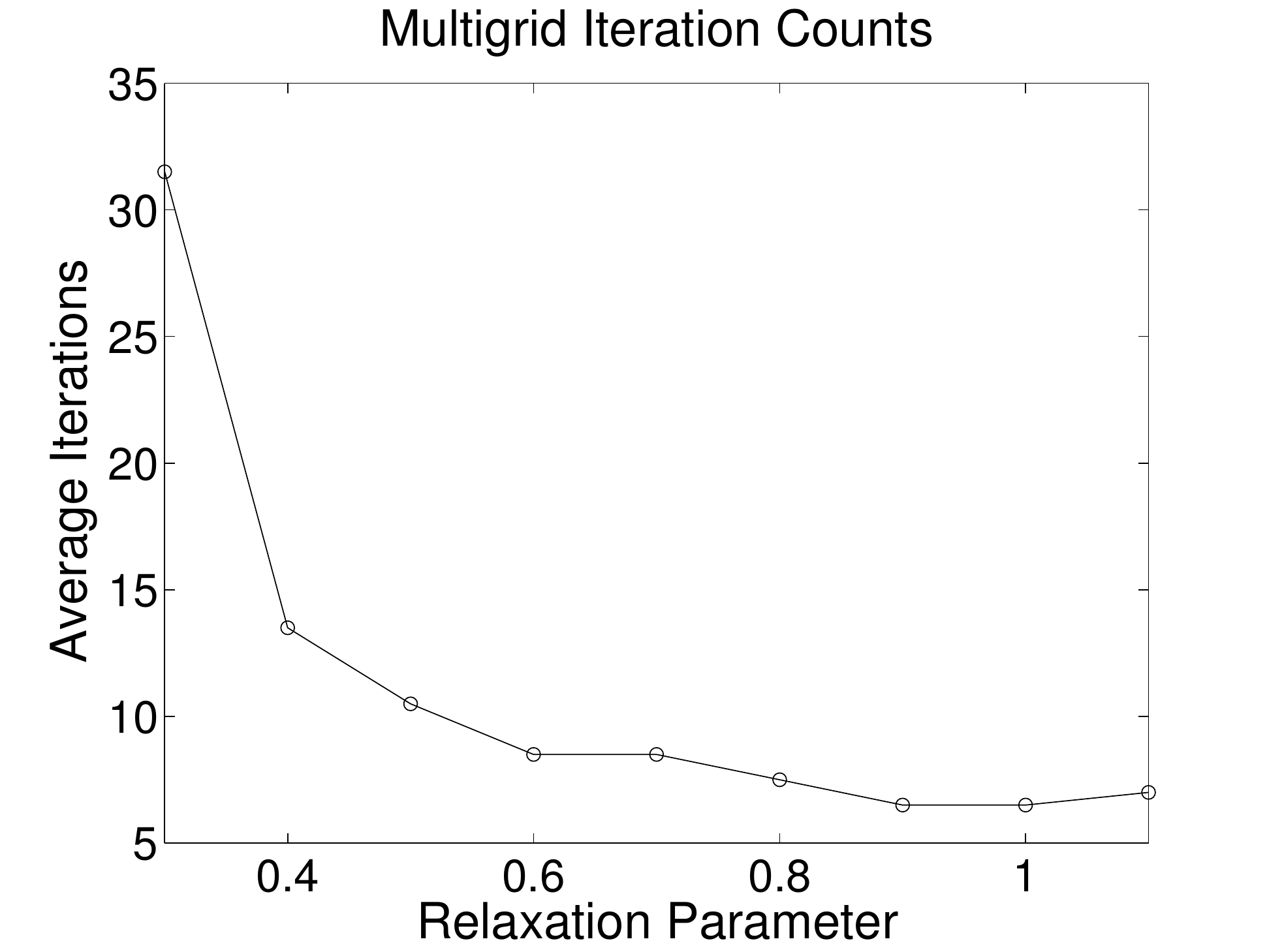}
  \caption{}
  \label{fig:left}
\end{subfigure}
\begin{subfigure}[b]{.49 \textwidth}
\raggedright
  \includegraphics[scale=.30]{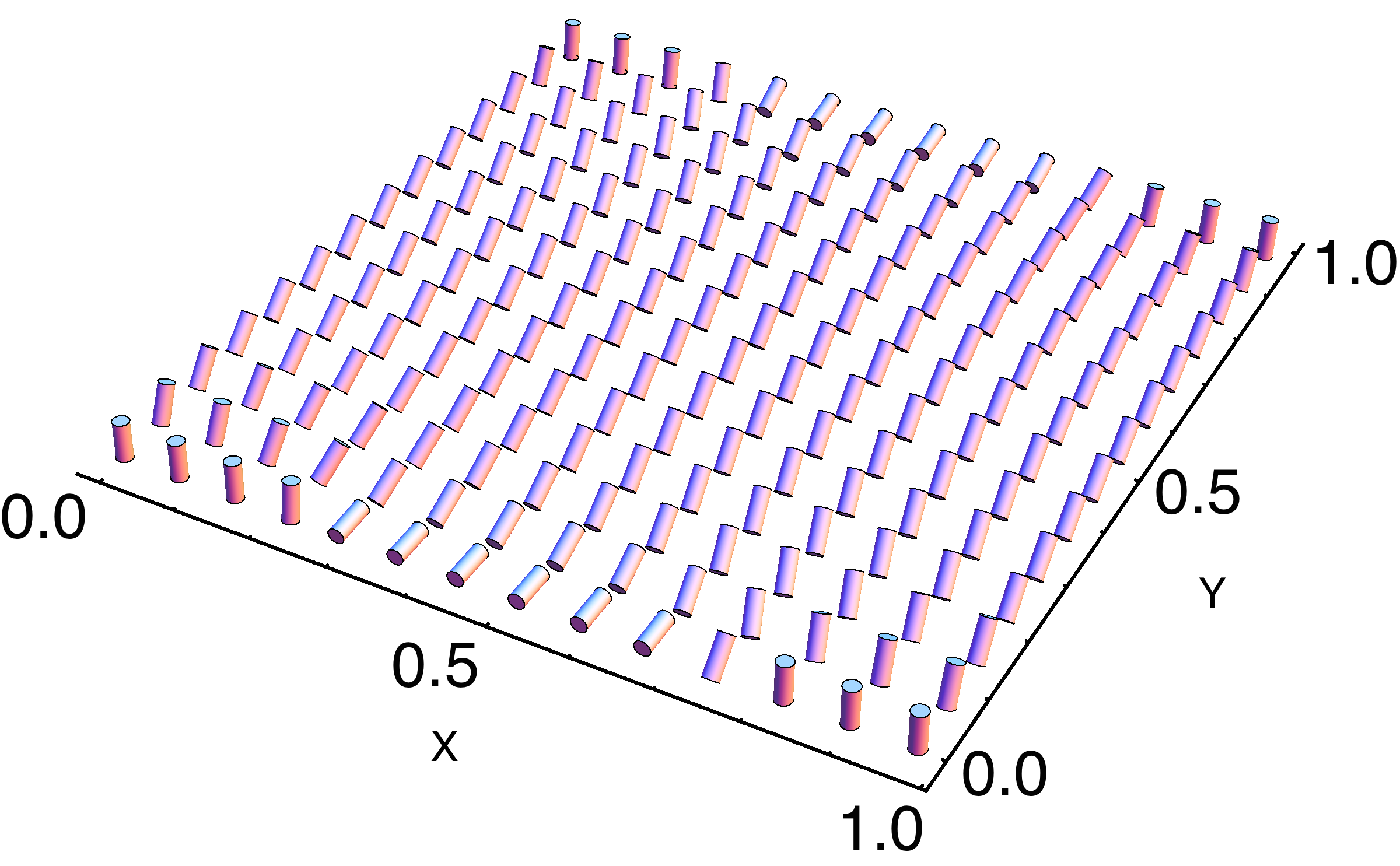}
  \caption{}
  \label{fig:right}
\end{subfigure}
\caption{\small{(\subref{fig:left}) The average number of multigrid iterations for varying $\zeta$ relaxation parameters on a $512 \times 512$ grid. (\subref{fig:right}) The final computed solution for the test problem on $512 \times 512$ mesh (restricted for visualization).}}
\label{VankaParameterTiming}
\end{figure}
For the figure, relaxation parameters smaller than $0.3$ are not included, as they resulted in iteration counts of over $100$ before the multigrid residual tolerance was satisfied. The studies indicate that a relaxation parameter of $\zeta = 1.00$ is optimal for convergence.

The second set of numerical experiments compares the system solve times for the Vanka-type solver against the performance of the UMFPACK LU decomposition solver utilized by deal.II. The experiments compare the linear solvers, on the above problem, with full nested iteration beginning on an $8 \times 8$ grid uniformly refining to a $512 \times 512$ mesh. For both solvers considered, we report the total time to solution, including both the setup and solve phases of the algorithms, but neglect some overhead associated with converting data formats and interfacing libraries. The optimal relaxation parameter, $\zeta = 1.00$, was used for the full Vanka-type relaxation technique. We consider multigrid methods using standard residual-based stopping tolerances, fixed on all grids, of reduction in the linear residual by factors of $10^{-8}$, $10^{-6}$, and $10^{-4}$. 

\begin{table}[h!]
\centering
{\small
\begin{tabular}{|c|c|c|c|c|c|c|c|}
\hline
 Solver\textbackslash Grid& $8\times8$ & $16 \times 16$ & $32 \times 32$ & $64 \times 64$ & $128 \times 128$ & $256 \times 256$ & $512 \times 512$ \\
\hline
LU & $0.02$ & $0.11$ & $0.57$ & $2.67$ & $12.03$ & $55.78$ & $275.86$ \\
\hline
Full 1e-8 & $0.05$ & $ 0.23$& $ 1.17$& $ 4.87$& $ 20.18$& $ 82.77$& $337.84$ \\
\hline
Full 1e-6 & $0.05$ & $0.20$& $ 0.91$& $ 3.78$& $ 16.72$& $ 66.38$& $ 276.91$ \\
\hline
Full 1e-4 & $0.04$ & $ 0.17$& $ 0.74$& $ 3.06$& $ 13.13 $& $ \mathbf{54.39}$& $ \mathbf{214.14}$ \\
\hline
\end{tabular}
}
\caption{\small{Comparison of average time to solution (in seconds) with LU decomposition (LU) and full Vanka relaxation (Full) for varying grid sizes. Numbers following the relaxation type indicate the multigrid residual tolerance. Bold face numbers indicate improved time to solution compared with the LU decomposition solver.}}
\label{VankaStaticTiming}
\end{table}

Table \ref{VankaStaticTiming} displays the average time to solution for the linear systems arising on successive grids. In the table, the multigrid solve timing is scaling nearly perfectly with grid size, while the LU decomposition solve times are growing at a faster rate. For the present timings, the LU decomposition solver is approximately scaling with a factor of $5$, and has an expected asymptotic scaling factor of $8$. The table also displays a clear confluence of the solve time for LU decomposition and the Vanka-type solver. For a multigrid residual tolerance of $10^{-4}$, the time to solution becomes nearly equal to that of the LU decomposition solver as early as the $128 \times 128$ grid. Moreover, though the applied Vanka-type relaxation method is an approximate linear solver, the number of overall Newton steps does not increase for any of the experiments compared to the direct solver. Therefore, the method is robust with respect to adjustments in the multigrid tolerance.

The results of these studies suggest that the full Vanka-type relaxation method discussed above is an effective, efficient, and scalable iterative solver applicable to the coupled saddle-point linear systems arising in the discretization of the electric and flexoelectric models. Furthermore, the relaxation technique exhibits notable performance for a range of multigrid residual tolerances and relaxation parameters. In the numerical simulations to follow, full Vanka relaxation is applied with a relaxation parameter of $1.00$ and a multigrid residual tolerance of $10^{-6}$ for assured accuracy. 

\subsection{Simple Electric Freedericksz Transition Results}

The first liquid crystal numerical experiment considers simple director boundary conditions, such that $\director$, along both of the substrates, lies uniformly parallel to the $x$-axis. The boundary conditions for the electric potential, $\phi$, are such that $\phi=0$ on the lower substrate at $y=0$ and $\phi=1$ at $y=1$. The relevant constants for the problem are detailed in Table \ref{relevantConstants}. Since the electric anisotropy constant, $\epsilon_a$, is positive, the expected behavior for the liquid crystal configuration is a Freedericksz transition \cite{Freedericksz1, Zocher1} so long as the applied field is strong enough to overcome the inherent elastic effects of the system. That is, for an applied voltage above a critical threshold, known as a Freedericksz threshold \cite{Stewart1}, the liquid crystal configuration will depart from uniform alignment parallel to the $x$-axis and instead tilt in the direction of the applied field. The problem considered has an analytical solution \cite[pg. 92-93]{Stewart1} demonstrating this behavior. The critical voltage is given by $V_c = \pi \sqrt{\frac{K_1}{\epsilon_0 \epsilon_a}}$. For the constants detailed in Table \ref{relevantConstants}, this implies a Freedericksz threshold of $0.7752$. Thus, the anticipated solution should demonstrate a true Freedericksz transition away from uniform free elastic alignment. Indeed, the final computed solution in Figure \ref{FreederickszTrans}, displayed alongside the initial guess for the algorithm, displays the expected transition.

\begin{table}[h!]
\centering
{\small
\begin{tabular}{|c|c|c|c|c|}
\hline
Elastic Constants & $K_1 = 1$ & $K_2=0.62903$ & $K_3= 1.32258$ & $\kappa=0.475608$\\
\hline
Electric Constants & $\epsilon_0 = 1.42809$ & $\epsilon_{\parallel}=18.5$ & $\epsilon_{\perp}=7$ & $\epsilon_a = 11.5$ \\
\hline
\end{tabular}
}
\caption{\small{Relevant liquid crystal constants for Freedericksz transition problem.}}
\label{relevantConstants}
\end{table}

\begin{figure}[h!]
\centering
\begin{subfigure}{.49 \textwidth}
\raggedleft
  \includegraphics[scale=.30]{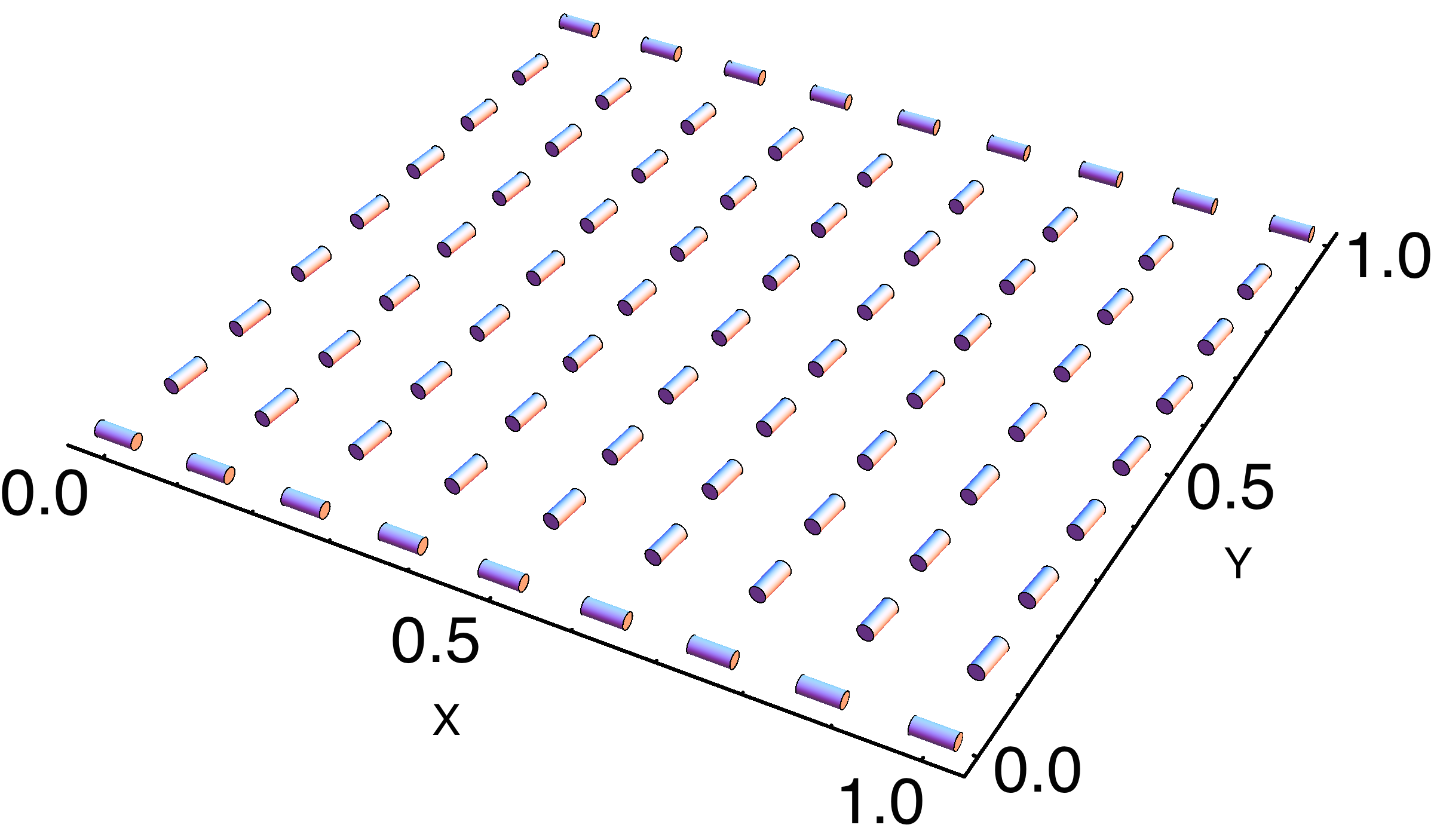}
    \caption{}
  \label{fig:left2}
\end{subfigure}
\begin{subfigure}{.49 \textwidth}
\raggedright
  \includegraphics[scale=.30]{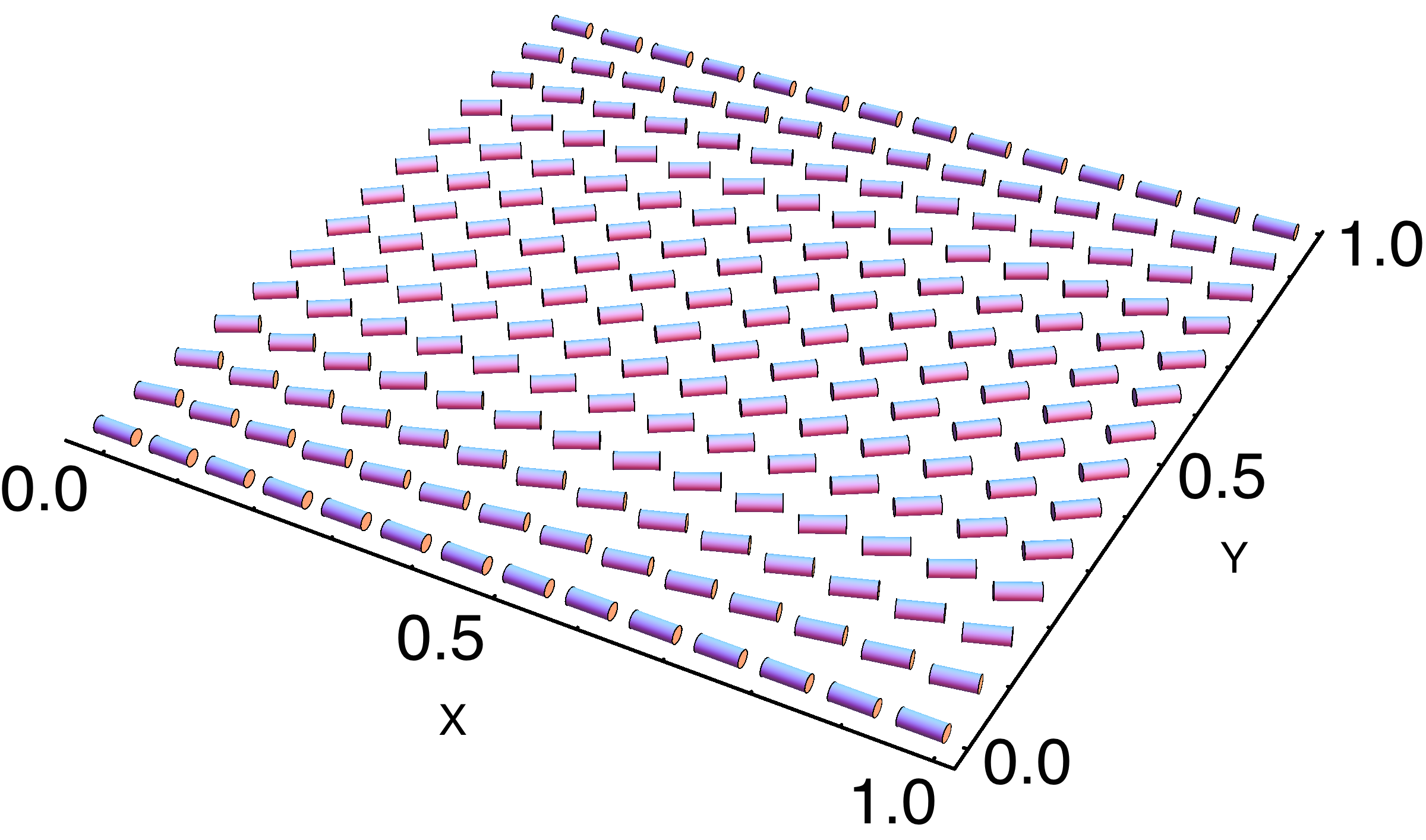}
    \caption{}
  \label{fig:right2}
\end{subfigure}
\caption{\small{(\subref{fig:left2}) Initial guess on $8 \times 8$ mesh with initial free energy of $26.767$ and (\subref{fig:right2}) resolved solution on $512 \times 512$ mesh (restricted for visualization) with final free energy of -5.330 for Freedericksz transition.}}
\label{FreederickszTrans}
\end{figure}

The problem is solved on a $8 \times 8$ coarse grid with six successive uniform refinements resulting in a $512 \times 512$ fine grid. The minimized functional energy is $\mathcal{F}_2 = -5.330$, compared to the initial guess energy of $26.767$. Figure \ref{ErrorandNIcounts}\subref{fig:left3} details the number of Newton iterations necessary to reduce the (nonlinear) residual below the given tolerance, $10^{-3}$, on each grid. Note that a sizable majority of the Newton iteration computations are isolated to the coarsest grids, with the finest grids requiring only one Newton iteration to reach the tolerance limit. 
Without the use of nested iteration, the algorithm requires $53$ Newton steps on the finest grid, alone, to reach a similar error measure. The nested-iteration-Newton-multigrid method achieves an accurate solution in $10.5$ minutes, compared to a total run time of over $5$ hours for standard Newton-multigrid. 
This corresponds to a speed up factor of $29.6$ or a work requirement for the nested iterations equivalent to $1.79$ times that of assembling and solving a single linearization step on the finest grid.
\begin{figure}[h!]
\centering
\begin{subfigure}{.49 \textwidth}
\raggedleft
  \includegraphics[scale=.35]{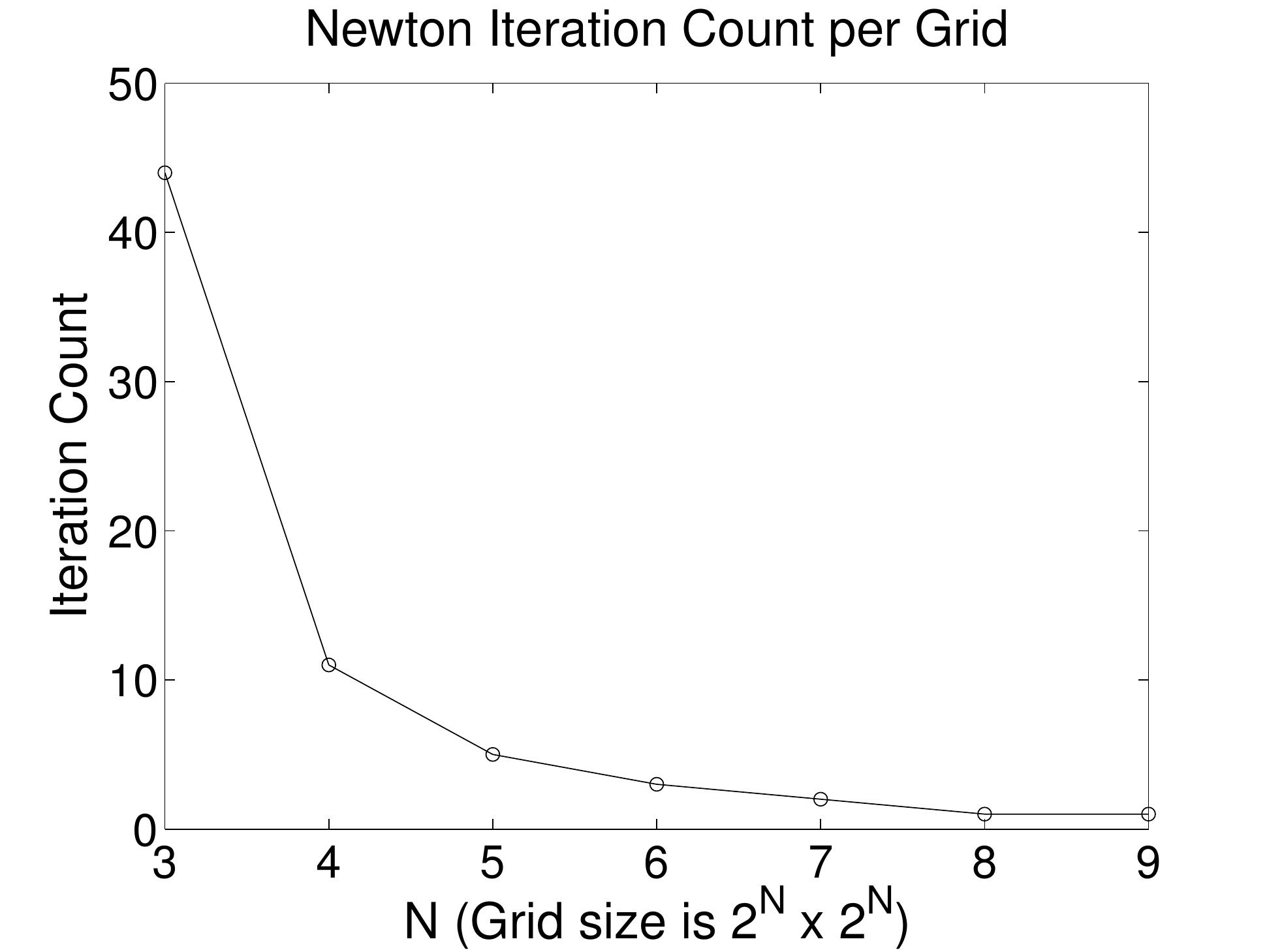}
      \caption{}
  \label{fig:left3}
\end{subfigure}
\begin{subfigure}{.49 \textwidth}
\raggedright
        \includegraphics[scale=.35]{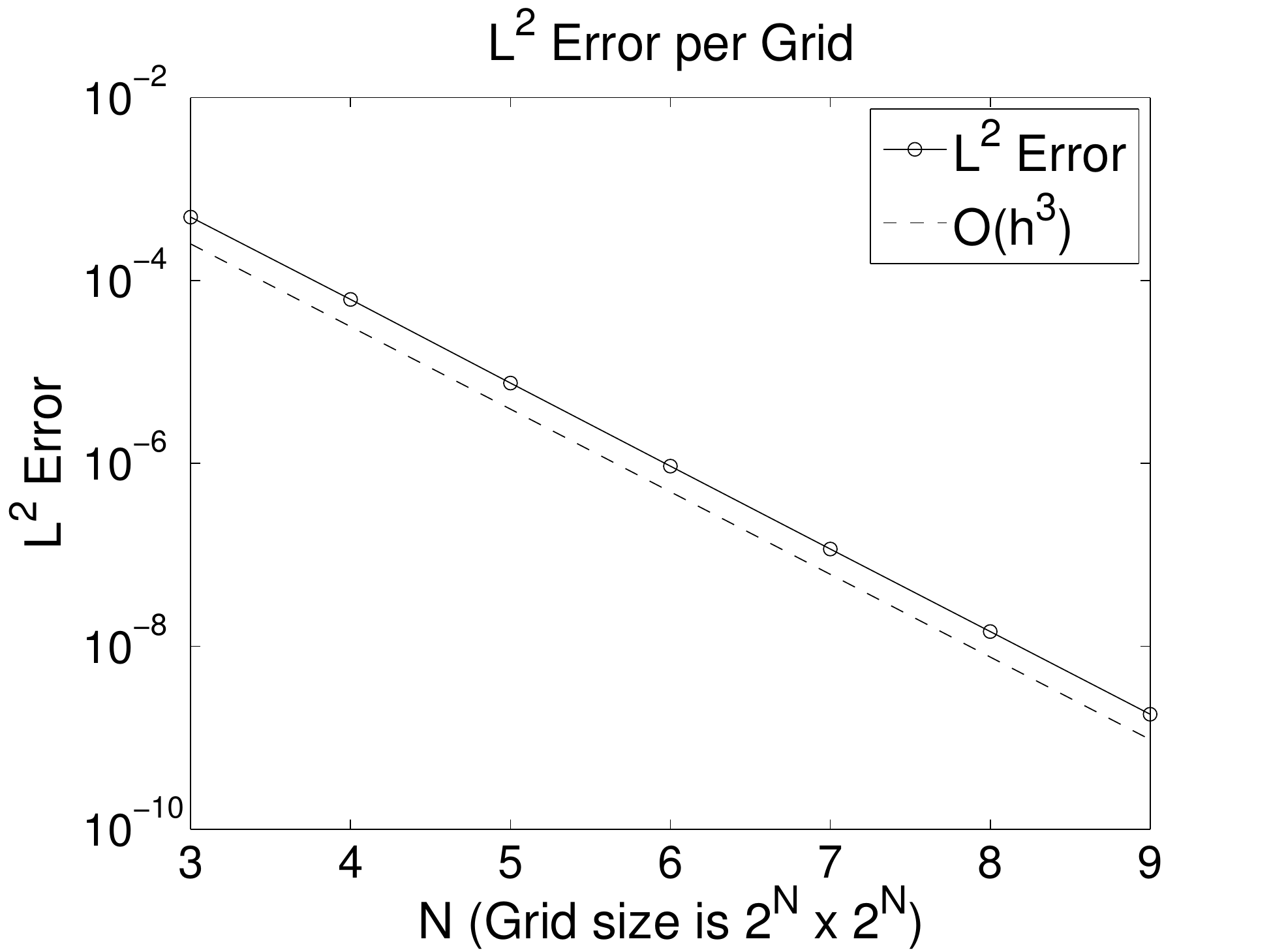}
      \caption{}
  \label{fig:right3}
\end{subfigure}
\caption{\small{(\subref{fig:left3}) Newton iterations and (\subref{fig:right3}) $L^2$-error per grid for the Freedericksz transition.}}
\label{ErrorandNIcounts}
\end{figure}

Also detailed in Figure \ref{ErrorandNIcounts}\subref{fig:right3} is the reduction in overall $L^2$-error comparing the analytical solution to the resolved solution on each grid. Note that the error is approximately reduced by a full order of magnitude on each successive grid, corresponding to approximately $O(h^3)$ reductions in overall error. Moreover, for the finer grids, a single Newton step was sufficient to achieve such a reduction.

\subsection{Electric Field with Patterned Boundary Conditions Results}

In the second liquid crystal run, the nano-patterned boundary conditions described by \eqref{nanopatterning1} - \eqref{nanopatterning3} are applied. The same constants outlined in Table \ref{relevantConstants} are also used for this problem. However, a stronger voltage such that $\phi=2$ on the substrate at $y=1$ is applied. Along the other substrate, $\phi$ remains equal to $0$. The final solution, as well as the initial guess, are displayed in Figure \ref{NanoBC}. For this problem, the grid progression again begins on an $8 \times 8$ grid ascending uniformly to a $512 \times 512$ fine grid. The minimized functional energy is $\mathcal{F}_2=-41.960$, compared to the initial guess energy of $-31.141$.
\begin{figure}[h!]
\centering
\begin{subfigure}{.49 \textwidth}
\raggedleft
  \includegraphics[scale=.30]{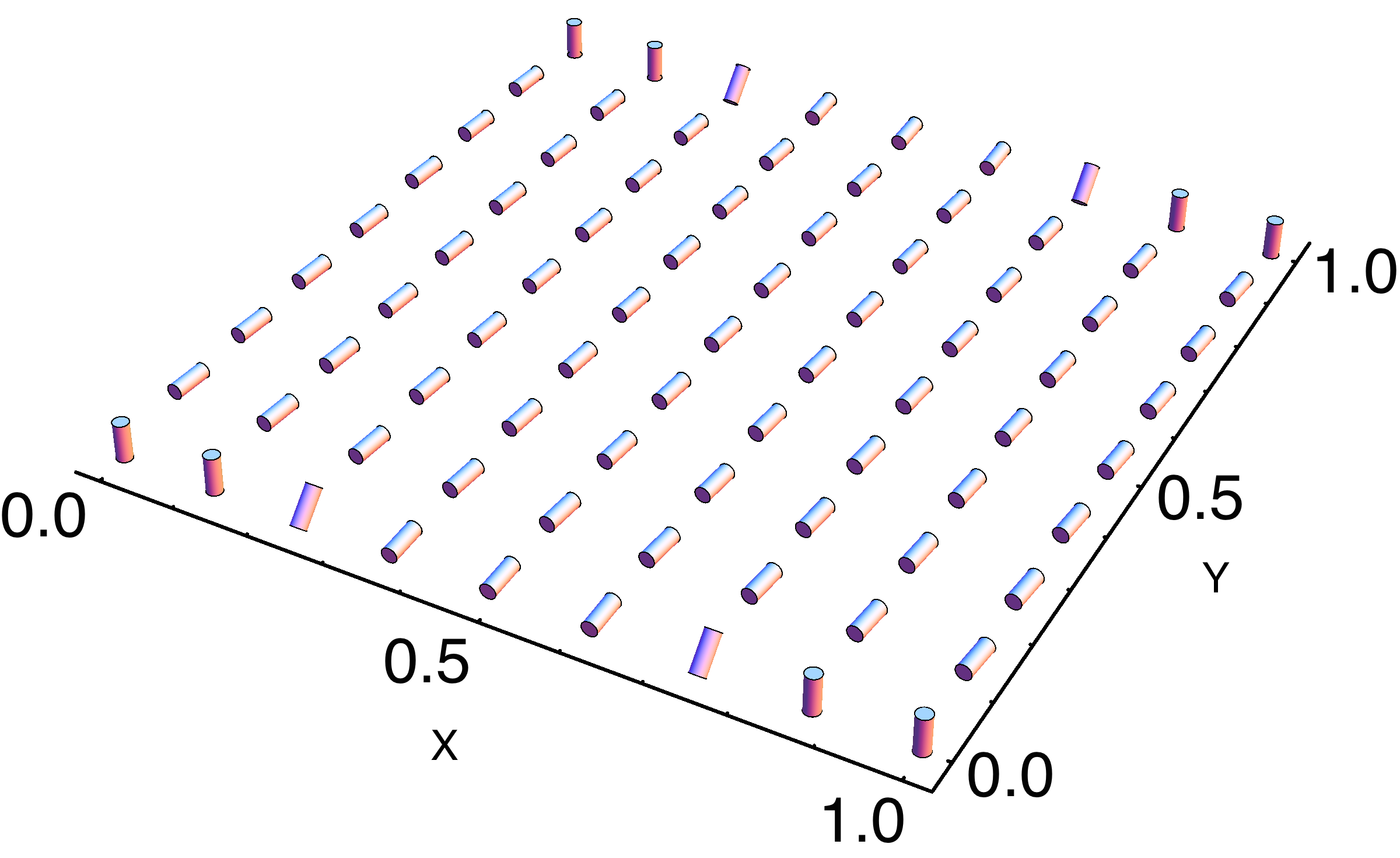}
        \caption{}
  \label{fig:left4}
\end{subfigure}
\begin{subfigure}{.49 \textwidth}
\raggedright
  \includegraphics[scale=.30]{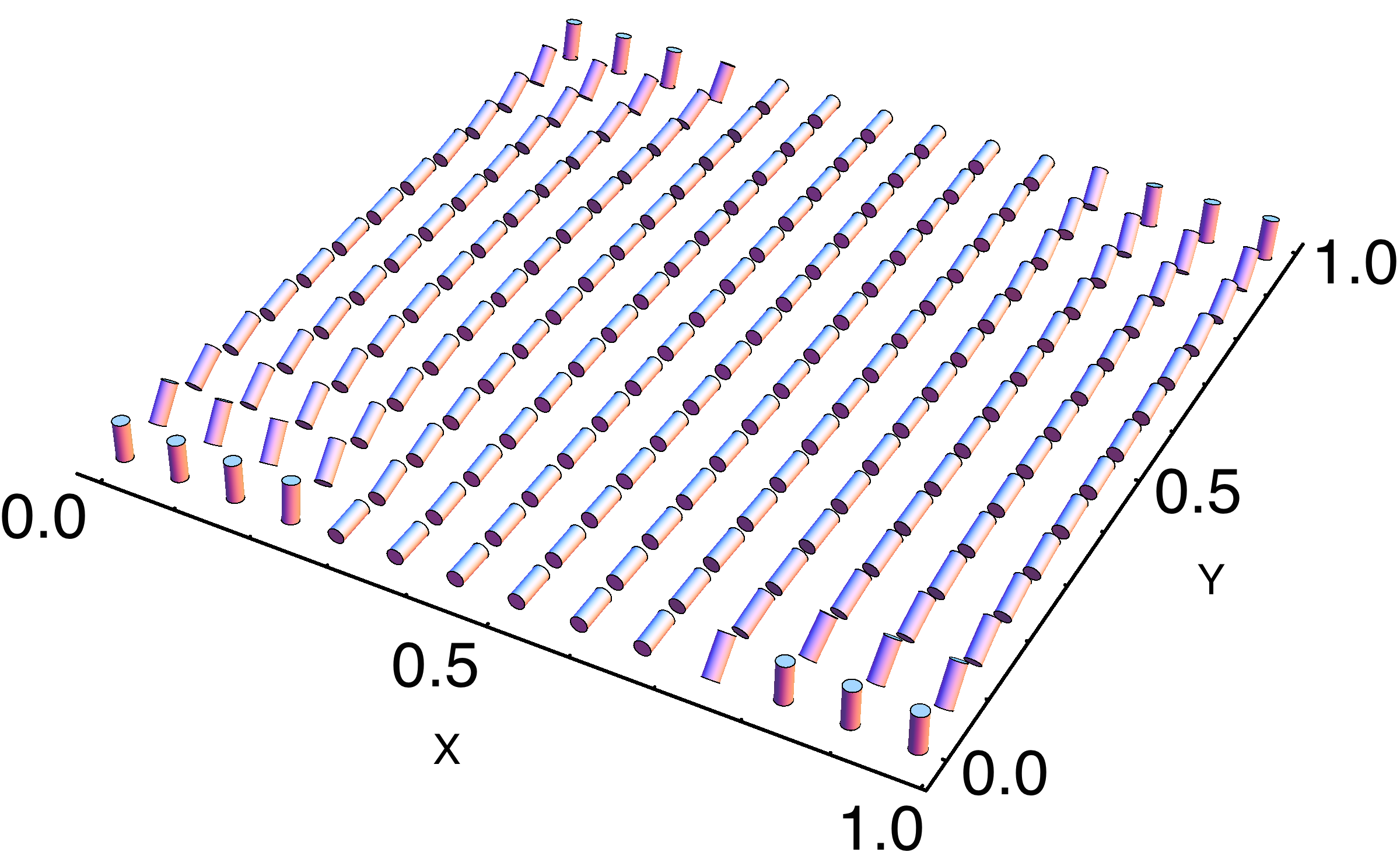}
        \caption{}
  \label{fig:right4}
\end{subfigure}
\caption{\small{(\subref{fig:left4}) Initial guess on $8 \times 8$ mesh with initial free energy of $-31.141$ and (\subref{fig:right4}) resolved solution on $512 \times 512$ mesh (restricted for visualization) with final free energy of -41.960 for nano-patterned boundary.}}
\label{NanoBC}
\end{figure}

\begin{table}[h!]
\centering
{\small
\begin{tabular}{|c|c|c|c|c|c|}
\hline
Grid Dim. &Newton Iter.&Init. Res.&Final Res.& Deviation in $\ltwonorm{\director}^2$ & Final Energy\\
\hline
$8 \times 8$ & 44 & 12.27e-00 & 6.79e-04 & -1.11e-01, 6.11e-02 & -42.701\\
\hline
$16 \times 16$ & 16 & 2.01e-00 & 5.74e-04 & -7.64e-02, 4.24e-02 &-42.170\\
\hline
$32 \times 32$ & 9 & 9.91e-01 & 2.60e-04 & -4.60e-02, 2.92e-02 &-41.963\\
\hline
$64 \times 64$ & 5 & 5.52e-01 & 1.76e-04 & -1.80e-02, 1.31e-02 &-41.950\\
\hline
$128 \times 128$ & 2 & 2.36e-01 & 3.13e-09 & -3.63e-03, 2.89e-03 &-41.960\\
\hline
$256 \times 256$ & 2 & 7.26e-02 & 1.65e-10 & -4.92e-04, 3.62e-04 &-41.960\\
\hline
$512 \times 512$ & 2 & 1.87e-02 & 6.10e-12 & -7.37e-05, 6.39e-05 &-41.960\\
\hline
\end{tabular}
}
\caption{\small{Grid and solution progression for electric problem and a nano-patterned boundary with initial and final residuals for the first-order optimality conditions, minimum and maximum director deviations from unit length at the quadrature nodes, and final functional energy on each grid.}}
\label{gridprogNanoBC}
\end{table}
\vspace{-.2in}
In Table \ref{gridprogNanoBC}, the number of Newton iterations per grid is detailed as well as the conformance of the solution to the first-order optimality conditions after the first and final Newton steps, respectively, on each grid. As with the previous example, much of the computational work is relegated to the coarsest grids. Here, the total work required is approximately $5.10$ times that of assembling and solving a single linearization step on the finest grid. In contrast, without nested iteration, the algorithm requires $52$ Newton steps on the $512 \times 512$ fine grid, alone, to satisfy the tolerance limit. While the nested-iteration-Newton-multigrid method achieves convergence in $30.4$ minutes, the standard Newton-multigrid total run time is over $5.3$ hours. Also shown in Table \ref{gridprogNanoBC}, the minimum and maximum director deviations from unit length at the quadrature nodes is descending towards zero.

Due to the sizable applied electric field, and the elastic influence of the central boundary condition pattern aligned with the electric field, the expected configuration is a quick transition from the boundary conditions to uniform alignment with the field. That is, the strength of the Freedericksz transition on the interior of $\Omega$ is augmented by the presence of this type of patterned boundary condition. This behavior is accurately resolved in the computed solution.

\subsection{Flexoelectric Phenomena}

As discussed above, internally generated electric fields due to flexoelectricity are an important physical aspect of liquid crystal configurations. This polarization due to curvature can significantly affect stable liquid crystal configurations in the presence of certain boundary conditions, such as patterned surfaces that cause large distortions in the nematic. These may also cause physical phenomenon such as bistability \cite{Atherton1, Atherton2, Davidson1} that are important for display applications. 

The following numerical results utilize similar boundary conditions to those in \eqref{nanopatterning1}-\eqref{nanopatterning3} with an extra parameter\footnote{Note that, here, $\varphi$ is utilized for the azimuthal angle whereas in \cite{Atherton1}, $\phi$ was used.}, $\varphi$, which has the effect of varying the imposed azimuthal director angle along the $x$-axis of the outer, vertically-aligned strips on the boundary,
\begin{align*}
n_1 &= \sin(\varphi)\sin\big(r(\pi + 2 \tan^{-1}(X_m) -2 \tan^{-1}(X_p))\big),\\
n_2 &= \cos\big(r(\pi + 2 \tan^{-1}(X_m) -2 \tan^{-1}(X_p))\big), \\
n_3 &= \cos(\varphi)\sin\big(r(\pi + 2 \tan^{-1}(X_m) -2 \tan^{-1}(X_p))\big).
\end{align*} 
The NI progression from $8 \times 8$ grids to $512 \times 512$ grids persists for each of the simulations. Due to the complexity of the flexoelectric systems, the nonlinear residual stopping tolerance is decreased to $10^{-5}$.

\begin{figure}[h!]
\centering
\begin{subfigure}{\textwidth}
\centering
  \includegraphics[scale=.35]{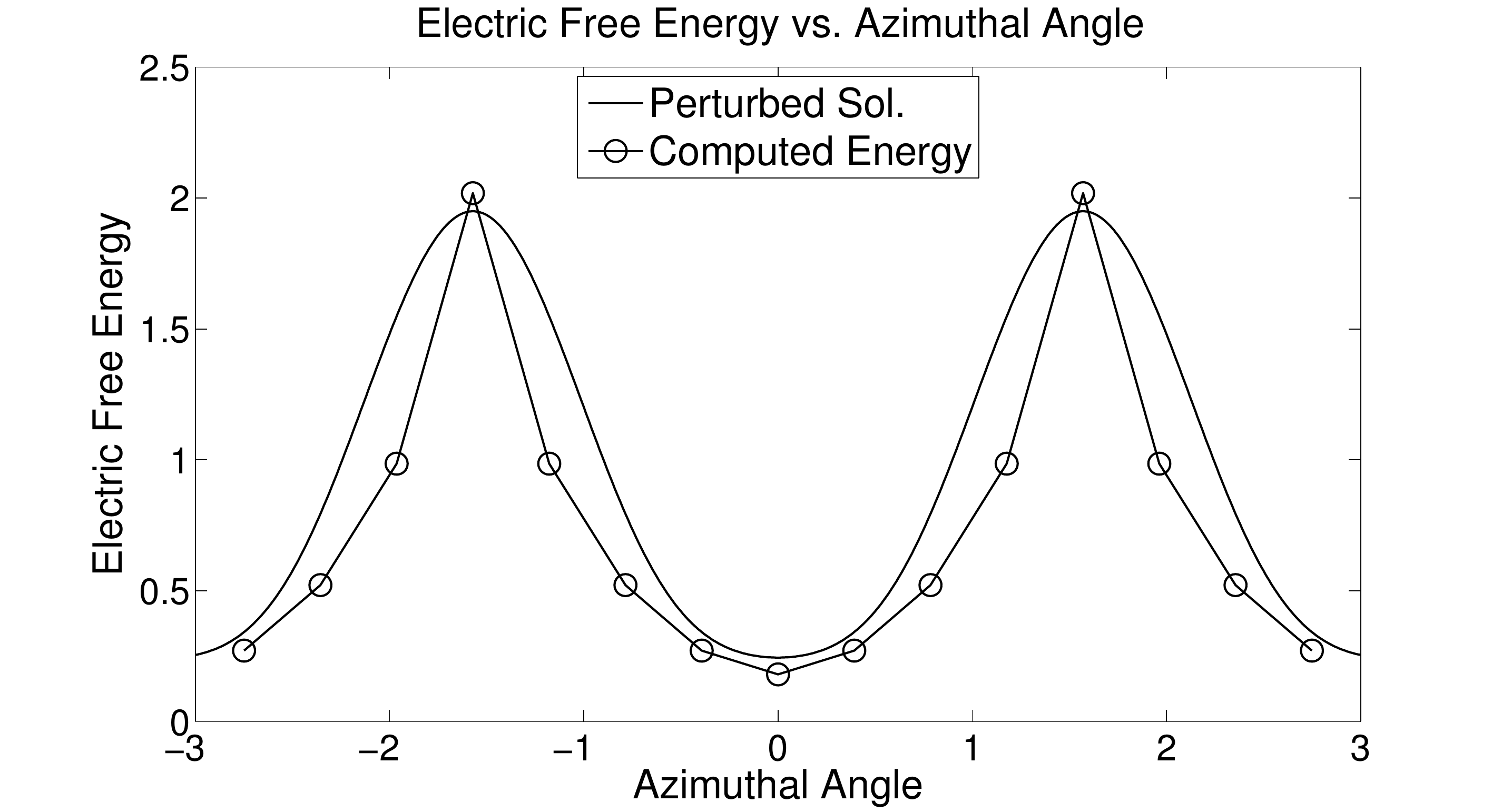}
\end{subfigure}
\caption{\small{The computed final free energy of the perturbative solution with $K_1 = K_2 = K_3 = 1$, $e_s = 5$, and $e_b =-5$ for varying $\varphi$ values.  A perturbation solution similar to that given in \cite{Atherton1} is overlaid}}
\label{FlexoEnergiesAnalyticalSolsa}
\begin{subfigure}{\textwidth}
\begin{center}
  \includegraphics[scale=.35]{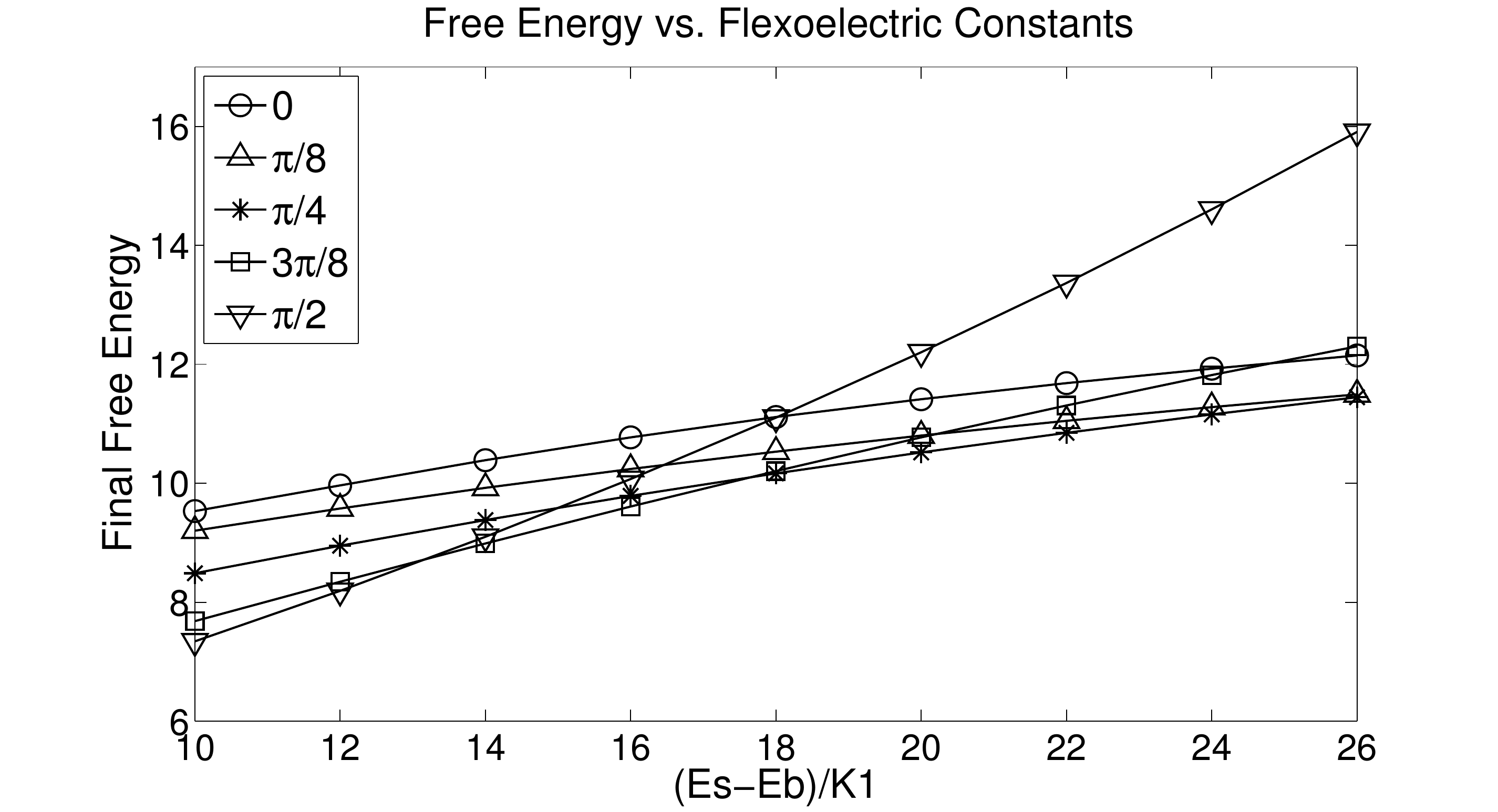}
  \end{center}
\end{subfigure}
\caption{\small{Final flexoelectric energies with nano-patterned boundary conditions for varying Rudquist constants $e_s$ and $e_b$. Each line corresponds to a different $\varphi$ value.}}
\label{FlexoEnergiesAnalyticalSolsb}
\end{figure}

In the first experiment, we isolate the influence of flexoelectricity on the configuration by removing elastic anisotropy, setting $K_1=K_2=K_3=1$, and using a small dielectric anisotropy $\epsilon_{\parallel} = 7$ and $\epsilon_{\perp} = 6.9$. For both experiments, as above, $\epsilon_0 = 1.42809$. The computed free energy as a function of the azimuthal angle $\varphi$ is shown in Figure \ref{FlexoEnergiesAnalyticalSolsa}, revealing that $\varphi=0$ and $\varphi=\pi$ are the minima, corresponding to alignment along the length of the stripes. Hence, flexoelectricity serves as an aligning effect in the presence of the patterned surface.  Also displayed in the figure is the free energy of a perturbation solution similar to the one derived in \cite{Atherton1} (note, a different unit convention and sign error exists in \cite{Atherton1}). There, the perturbation solution is valid for a single semi-infinite planar-vertical junction.  In the numerical computation, the director profile for the striped cell consists of four junctions per unit cell.  Thus, we approximate the perturbation by adding the mirror image and doubling.  If the junctions are well separated from each other, the cell thickness is larger than the penetration depth of the nematic, and the length of the surface planar-vertical transition is very small, this is a valid approximation.  Even within this limitation, though, the computed energies trace the characteristics of the perturbation solution quite closely, verifying the alignment influence of flexoelectricity. Therefore, when considering internally induced electric fields in the presence of nano-patterned boundaries, the algorithm's computed free energies capture the qualitative prediction from the perturbation solution, but do so with a quantitative accuracy that is not readily matched by perturbation techniques.

For the second experiment, $\epsilon_{\parallel} = 7$ and $\epsilon_{\perp} = 7$. By including anisotropic elastic constants, it is possible to promote alignment perpendicular to the stripes, if $K_1,K_3 < K_2$, or parallel to the length of the stripes, if $K_1,K_3 > K_2$. We use $K_1=K_3=1$ and $K_2=4$ to select perpendicular alignment and simulate the configurations with $\varphi \in \{0, \frac{\pi}{8}, \frac{\pi}{4}, \frac{3\pi}{8}, \frac{\pi}{2} \}$ for varying values of the flexoelectric constants; the results are displayed in Figure \ref{FlexoEnergiesAnalyticalSolsb} . As can be seen, for $(e_b -e_s)/K_1 = 10$, the overall minimum of the free energy lies at an azimuthal angle $\varphi=\pi/2$ as expected. As the flexoelectric parameter is increased however, the configurations with different azimuthal angle increase at different rates; for example at a critical value of $(e_b -e_s)/K_1 \approx 17.5$, the solutions for $\varphi=0$ and $\varphi=\pi/2$ become degenerate.  Hence, as the strength of the flexoelectric effect is increased, the azimuthal angle corresponding to the ground state gradually rotates because flexoelectricity and elastic anisotropy favor opposing configurations. The phenomenon is important for applications because it may lead to multiple stable configurations in some regions of the parameter space, or a significant renormalization of the anchoring behavior for materials with large flexoelectric response. These phenomena allow engineers to control the ground states and, potentially, the switching response by adjusting the pattern. The above efficient numerical model would be a valuable tool in identifying the parameters that lead to the desired effect.

\section{Summary and Future Work} \label{conclusion}

We have discussed a constrained minimization approach to solving for liquid crystal equilibrium configurations in the presence of applied and internal electric fields. Such minimization is founded upon the electrically and flexoelectrically augmented Frank-Oseen models. Due to the nonlinearity of the continuum first-order optimality conditions, Newton linearizations were needed. The discrete Hessian arising in the finite-element discretization of these linearized systems was shown to be invertible, for both models, under certain assumptions on the bilinear forms. Using the finite-element spaces discussed in \cite{Emerson1}, these assumptions are satisfied. Additionally, an efficient iterative solvers utilizing a Vanka-type relaxation technique was implemented and shown to possess desirable solve timings and convergence properties for highly refined meshes.

Numerical results demonstrated the accuracy and efficiency of the algorithm in resolving both classical and complicated features induced by applied and internal electric fields. The method efficiently captured expected, complicated, physical phenomenon due to flexoelectric effects. In addition, the minimization approach overcomes some difficulties inherent to the liquid crystal equilibrium problem, such as the nonlinear unit length director constraint and effectively deals with heterogeneous Frank constants. The algorithm also productively utilizes nested iterations to reduce computational costs by isolating much of the computational work to the coarsest grids. Future work will include the study of effective adaptive refinement and linearization tolerance schemes. Further, investigation of line search and trust region algorithms and their performance will be undertaken.

\section*{Acknowledgments}
The authors would like to thank Professor Thomas Manteuffel for his useful contributions and suggestions.


\bibliographystyle{plain}	

\nocite{*}		

\bibliography{LiquidCrystalElectricField}		

\end{document}